\documentclass[11pt,a4paper]{article}

\usepackage[utf8]{inputenc}
\usepackage[T1]{fontenc}
\usepackage{lmodern}
\usepackage{amsmath,amssymb,amsthm}
\usepackage{array}
\usepackage{booktabs}
\usepackage{hyperref}
\usepackage{geometry}
\usepackage{color,soul}
\usepackage{yhmath}
\newtheorem{theorem}{Theorem}[section]
\newtheorem{proposition}[theorem]{Proposition}
\newtheorem{corollary}[theorem]{Corollary}

\newtheorem{definition}[theorem]{Definition}

\newtheorem{assumption}{Assumption}

\title{Evolution as a Process of Causal Inference}
\author{Jacopo Iacovacci\\[1ex]
\small Department of Epidemiology and Data Science\\ \small Fondazione IRCCS Istituto Nazionale dei Tumori di Milano\\ \small Via Giacomo Venezian 1, 20133, Milan, Italy\\[0.5ex]
\small jacopo.iacovacci@istitutotumori.mi.it}
\date{28 May 2026}

\begin{document}

\maketitle

\begin{abstract}
Recently, the mapping of the replicator equation onto Bayes' theorem has been recognised, leading to an analogy between evolutionary dynamics and Bayesian learning. 
However, this analogy holds only for pure selection in infinite populations and breaks down when mutations --- a central mechanism of evolution --- are introduced.

Here I propose that evolution by natural selection, at least for populations of haploid replicators in static environments, is best understood not as a learning process but as a process of causal inference.
Each mutation event constitutes a natural experiment in which the parent serves as the control and the mutant offspring as the treated unit.
Natural selection screens the causal effect of the mutation on fitness, retaining mutations with non-negative effects.
I formalise this view within the Neyman--Rubin potential--outcomes framework. I first develop the general theory using a generic fitness outcome and show how the core identification assumptions in causal inference (Stable Unit Treatment Value Assumption, Consistency, Unconfoundedness, Positivity) map onto evolutionary biology.

Using the unnormalised quasispecies equation, I prove that the intergenerational change in mean fitness decomposes exactly into a selection term --- recovering Fisher's Fundamental Theorem --- plus a mutation term that corresponds to a fitness--weighted average of the cumulated effect of all mutations over all parental genotypes. I show that this decomposition extends, under suitable assumptions, to the generalised replicator--mutator equation and that the frequencies of populations of matched parents--offspring update in proportion to the average causal effect of mutations on fitness.

Finally, I classify the major definitions of fitness according to whether they satisfy the Stable Unit Treatment Value Assumption (Appendix~A) and I list the major mutation mechanisms which guarantee Unconfoundedness (Appendix~B).

\medskip
\end{abstract}

\section{Introduction}
\label{sec:intro}

In 2009, Harper (Harper2009) and Shalizi (Shalizi2009) independently recognised a formal algebraic equivalence between the discrete replicator equation
with $n$ populations $\mathbf{x} = (x_1, \ldots, x_n)$ of fitnesses $w_1, \ldots, w_n$:
\begin{equation}
x_i' = \frac{x_i\, w_i}{\bar{w}}, \qquad \bar{w} = \sum_k x_k\, w_k,
\label{eq:replicator}
\end{equation}
and the Bayes' theorem for updating a prior $P(H_i)$ on hypotheses $H_1, \ldots, H_n$ after observing evidence $E$:
\begin{equation}
P(H_i \mid E) = \frac{P(E \mid H_i)\, P(H_i)}{P(E)}, \qquad P(E) = \sum_k P(E \mid H_k)\, P(H_k).
\label{eq:bayes}
\end{equation}
It is immediate to recognise the structural identity under the mapping reported in the following table
\begin{center}
\begin{tabular}{lll}
\toprule
Replicator & & Bayes \\
\midrule
Type frequency $x_i$ & $\longleftrightarrow$ & Prior $P(H_i)$ \\
Fitness $w_i$ & $\longleftrightarrow$ & Likelihood $P(E \mid H_i)$ \\
Mean fitness $\bar{w}$ & $\longleftrightarrow$ & Marginal likelihood $P(E)$ \\
Updated frequency $x_i'$ & $\longleftrightarrow$ & Posterior $P(H_i \mid E)$ \\
\bottomrule
\end{tabular}
\end{center}
This equivalence led to the proposal that evolution can be regarded as a process of Bayesian inference, where populations of replicators ``learn'' about the environment by updating their genotype distribution in proportion to their fitness, just as a Bayesian agent updates its hypotheses in proportion to the likelihood of evidence. The Bayesian analogy was later extended by Watson and Szathm\'{a}ry (Watson \& Szathm\'{a}ry, 2016), who argued that the capacity for learning increases with the complexity of the evolving system, and by Cz\'{e}gel et al.\ (Cz\'{e}gel et al. 2019), who interpreted multilevel selection as hierarchical Bayesian inference. More recently, Bettencourt et al.\ (Bettencourt et al., 2025) proposed redefining fitness as the likelihood function that emerges from this mapping.

While elegant, however, the Bayesian analogy suffers from limitations. First, it applies only to the pure replicator equation, i.e., to evolutionary processes of selection without mutation and in infinite populations.
Aky{\i}ld{\i}z (Aky{\i}ld{\i}z, 2017) recently showed that the replicator--mutator equation can not be written as a Bayesian update, and corresponds instead to the prediction step of a Hidden Markov Model.
Second, as Shalizi himself stressed (Shalizi, 2009), the correct direction of the analogy is that ``Bayesian updating also follows the replicator equation'' --- not the reverse.
The replicator equation is indeed a general description of proportional growth with normalisation and many dynamical systems share this form without being instances of Bayesian inference.
Third, the analogy inherits all the limitations of adaptationism (Gould \& Lewontin, 1979), ignoring neutral evolution, genetic drift, and non-adaptive forces that shape the majority of molecular evolutionary change (Lynch, 2007).

In this work, a fundamentally different perspective is proposed and its {mathematical foundation is laid down}. Evolution is regarded as a process of \emph{causal inference} (Pearl, 2010)
where mutation events act as natural interventions applied to single evolvable units.
Natural selection evaluates the causal effect of the mutation (intervention) on fitness and, consequently, populations evolve by accumulating interventions
with non-negative causal effects.

Throughout this work the focus is on haploid, asexual populations in static environments with point mutations, a setting in which the causal assumptions required for identification are most transparent.
In Section~\ref{sec:causal}, mutation events are formalised as individual-level interventions within the Neyman--Rubin potential--outcomes framework (Rubin, 1974), and the \emph{individual mutation effect} and the \emph{average mutation effect} on fitness are defined accordingly.
Section~\ref{sec:dynamics} embeds these causal quantities into classical models of evolutionary dynamics, yielding an exact decomposition of inter-generational fitness change into selection and mutation components for the quasispecies equation and identifying conditions under which the causal decomposition holds exactly or approximately for the generalised replicator--mutator equation. 
Section~\ref{sec:updating} clarifies the precise sense in which mutation disrupts the Bayesian--learning analogy, by showing that in the replicator--mutator dynamics, the updating of frequencies is proportional to the \emph{average mutation effect}.
In Section~\ref{sec:discussion}, I discuss the main conceptual and mathematical contributions of the proposed framework as well as its limitations and possible extensions, also in relation to existing causal approaches to evolution.
In Appendix~\ref{sec:sutva}, I provide a systematic classification of existing fitness definitions and classify them by whether they satisfy the Stable Unit Treatment Value Assumption, showing that individual-level definitions (viability, Wrightian under density independence, Malthusian, geometric mean, Euler--Lotka) satisfy SUTVA, while relational definitions (frequency-dependent, inclusive, invasion, game-theoretic, group) require interference frameworks.
Finally, in Appendix~\ref{sec:ignorability}, I discuss the main mechanisms of mutations which satisfy the Unconfoundedness assumption.

\section{Evolution as Natural Causal Inference}
\label{sec:causal}
\subsection{Mutation Events as Natural Interventions}

Consider a haploid organism (the parent) with genotype $g$, living in environment $E$. The vector $\mathbf{x}_g$ describes its genetic background.
The parent reproduces by generating an offspring which may carry a point mutation at a given locus due to the replication process. We define the mutation variable $M$:
\[
M \in \{0, 1\},
\]
where $M = 0$ indicates no mutation (i.e., the offspring is an exact copy of the parent) and $M = 1$ indicates a mutation.
The probability of mutation is $P(M = 1) = \mu$, corresponding to the per--generation mutation rate.

\begin{definition}[Potential Fitness Outcomes]
\label{def:potential}
For a given parent with genetic background $\mathbf{x}_g$ in environment $E$, let $Y$ denote the fitness of the hypothetical offspring it would produce.
We define two potential fitness outcomes for this hypothetical offspring given the mutation $M$ would occur or not:
\begin{align}
Y(0) &= {\text{fitness of offspring if no mutation occurred}} \\
Y(1) &= {\text{fitness of offspring if mutation occurred}}.
\end{align}
\end{definition}
$Y(1)$ and $Y(0)$ correspond to the values of an outcome of interest (in this case, the fitness) that a unit (the individual offspring) \emph{would have} under each intervention level (i.e., mutation or no mutation, respectively).
The fundamental problem of causal inference (Holland, 1986) is that for any unit, only one potential outcome can be observed:
\begin{equation}
Y_i^{\text{obs}} = M \cdot Y(1) + (1 - M) \cdot Y(0).
\label{eq:fundamental}
\tag{3}
\end{equation}

However, in evolution, nature provides a solution to this problem: when a haploid parent replicates and generates an offspring with a genetic mutation, parent and offspring constitute a \emph{natural matched pair} in which the offspring shares the genetic background of the parent and inhabits the same environment $E$. The parent's fitness $Y^{\text{par}}$ corresponds to $Y(0)$ (the fitness the offspring would have had without mutation) and the mutant offspring's fitness $Y^{\text{off}}$ corresponds to $Y(1)$. Therefore, the difference between the offspring fitness and the parent fitness can be regarded as the Individual Treatment Effect of the mutation.\\
We can therefore estimate the \emph{Individual Mutation Effect} (IME) of a mutation as follows:
\begin{equation}
\delta = Y^{\text{off}} - Y^{\text{par}} = Y(1) - Y(0) = \mathrm{IME}.
\label{eq:ime}
\tag{4}
\end{equation}
\noindent The Individual Mutation Effect (IME) measures the causal effect of the mutation on fitness by comparing the fitness of the mutant offspring to that of its parent, who serves as the natural, genetically matched control.
In Eq.~\eqref{eq:ime}, the fitness of the parent and of its offspring are evaluated over the same time window, starting at the time $T_0$ when the offspring is generated. This ensures that $Y^{\text{par}}$ and $Y^{\text{off}}$ represent outcomes over a shared temporal baseline, as required for a well-defined individual causal contrast.
\subsection{Average Mutation Effect}

Given Eq.~\eqref{eq:ime}, it is possible to define an \emph{Average Mutation Effect} as the expected value of the IME over a population of matched parent--offspring pairs in which all offspring have undergone the same type of mutation.

\begin{definition}[Average Mutation Effect]
\label{def:ame}
Over a population of parent--offspring pairs in a static environment where the same mutation has occurred in the offspring, the average causal effect of the mutation on fitness is defined as
\begin{equation}
\mathrm{AME} = \mathbb{E}[\delta_i] = \mathbb{E}\bigl[Y_i(1) - Y_i(0)\bigr].
\label{eq:ame_def}
\tag{5}
\end{equation}
\end{definition}

The AME corresponds to the Average Treatment Effect (ATE) of the Neyman--Rubin framework, when considering a mutation as the intervention and the fitness as the outcome under analysis. In particular, the AME is defined conditional on the parental genetic background and environment.\\
The standard assumptions of the potential--outcome framework helps understanding which evolutionary conditions are necessary to identify the causal mutation effects in Eqs.~\eqref{eq:ime} and~\eqref{eq:ame_def}.

\begin{assumption}[SUTVA --- Stable Unit Treatment Value Assumption]
\label{as:sutva}
The treatment applied to one unit does not affect the potential outcomes for another unit, and there is only one version of each treatment level.
\end{assumption}
\noindent The no-interference component is satisfied when the fitness of the parent population is independent of both the density and the fitness of the offspring population and vice versa.
Appendix~\ref{sec:sutva} analyses which fitness definitions satisfy or violate SUTVA.

\begin{assumption}[Consistency]
\label{as:consistency}
If $M = 1$, then $Y_i^{\text{obs}} = Y_i(1)$.
\end{assumption}
\noindent This assumption is satisfied by definition, as the observed fitness of a mutant is its fitness under that mutation.
\begin{assumption}[Ignorability / Unconfoundedness]
\label{as:ignorability}
$\{Y_i(0), Y_i(1)\} \perp\!\!\!\perp M \mid \mathbf{X}$. The treatment assignment is independent of the potential outcomes, conditional on a set of observed pre-treatment covariates.
\end{assumption}

\noindent This assumption relates to the nature of the mutation considered. Point mutations are, to first approximation, \emph{random} with respect to their fitness effects, i.e.,
they happen irregardless of whether they will be beneficial, neutral, or deleterious. Therefore, point mutations can be considered in first approximation randomised interventions that satisfy
the stricter condition $\{Y_i(0), Y_i(1)\} \perp\!\!\!\perp M$.
Appendix~\ref{sec:ignorability} analyses which classes of mutations satisfy the ignorability condition, providing the conditioning set $\mathbf{X}$ required for each class.

\begin{assumption}[Positivity]
\label{as:positivity}
$0 < P(M = 1 \mid \mathbf{X}) < 1$ for all $\mathbf{X}$.
\end{assumption}
\noindent This assumption is satisfied whenever the mutation rate $\mu > 0$, which is universally the case in biological populations.\\

When the four assumptions above are satisfied, an estimand of the AME can be identified:

\begin{theorem}[Identification of the AME]
\label{thm:identification}
Under Assumptions~\ref{as:sutva}--\ref{as:positivity}, the Average Mutation Effect of a mutation $M$ on the fitness $Y$ of a population with given genetic background $\mathbf{x}_g$ and in a given environment $E$ is identified from observable quantities as:
\begin{equation}
\mathrm{AME} = \mathbb{E}[Y \mid M = 1] - \mathbb{E}[Y \mid M = 0]. \label{eq:identification}
\tag{6}
\end{equation}
\end{theorem}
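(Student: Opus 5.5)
The argument is the textbook identification of the Average Treatment Effect in the potential--outcomes framework, specialised to the mutation setting. Throughout I condition on the parental background $\mathbf{x}_g$ and environment $E$, and write $\mathbf{X}$ for the conditioning set in Assumption~\ref{as:ignorability}. I first prove the conditional statement (given $\mathbf{X}$) and then explain how it yields Eq.~\eqref{eq:identification}.

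\textbf{Step 1.} Start from Definition~\ref{def:ame} and use linearity of expectation:
\[
\mathrm{AME}=\mathbb{E}[Y(1)]-\mathbb{E}[Y(0)].
\]
SUTVA (Assumption~\ref{as:sutva}) makes each unit's potential outcomes $Y_i(0),Y_i(1)$ well defined. They do not depend on the mutation status of other offspring or parents, and each treatment level has a single version. So these are honest random variables over the population of matched pairs.

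\textbf{Step 2.} Show that $\mathbb{E}[Y(m)\mid \mathbf{X}]=\mathbb{E}[Y\mid M=m,\mathbf{X}]$ for $m\in\{0,1\}$. This uses three facts in turn:
\begin{itemize}
\item Ignorability (Assumption~\ref{as:ignorability}) gives $\mathbb{E}[Y(m)\mid\mathbf{X}]=\mathbb{E}[Y(m)\mid M=m,\mathbf{X}]$.
\item Consistency gives $\mathbb{E}[Y(m)\mid M=m,\mathbf{X}]=\mathbb{E}[Y^{\mathrm{obs}}\mid M=m,\mathbf{X}]$, by Eq.~\eqref{eq:fundamental}. Assumption~\ref{as:consistency} is stated only for $M=1$, so I would note that the case $M=0$ holds by the same reasoning: an unmutated offspring's observed fitness is $Y(0)$.
\item Positivity (Assumption~\ref{as:positivity}) ensures $P(M=m\mid\mathbf{X})>0$ for both $m$. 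Hence the conditional expectations $\mathbb{E}[\,\cdot\mid M=m,\mathbf{X}]$ are well defined, because both arms are observed.
\end{itemize}

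\textbf{Step 3.} Average over $\mathbf{X}$. This gives
\[
\mathrm{AME}=\mathbb{E}_{\mathbf{X}}\bigl[\mathbb{E}[Y\mid M=1,\mathbf{X}]-\mathbb{E}[Y\mid M=0,\mathbf{X}]\bigr].
\]
This reduces to the unadjusted contrast in Eq.~\eqref{eq:identification} in either of two cases:
\begin{itemize}
\item the stronger randomisation $\{Y(0),Y(1)\}\perp\!\!\!\perp M$ holds, as argued for point mutations, in which case the conditioning on $\mathbf{X}$ can be dropped throughout; or
\item the theorem is read as conditional on the fixed background $\mathbf{x}_g$ and environment $E$, which together exhaust $\mathbf{X}$.
\end{itemize}

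\textbf{Main obstacle.} The weak point is this last reduction. As literally stated, Ignorability is only conditional on $\mathbf{X}$, so Eq.~\eqref{eq:identification} needs either the marginal independence or the interpretation of $\mathbf{X}$ as the fixed $(\mathbf{x}_g,E)$. I would state this explicitly in the proof. A secondary point is the one-sided Consistency assumption, which needs the $M=0$ case added as above.
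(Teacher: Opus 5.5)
Your proof is correct and is essentially the paper's argument: the paper runs the same chain $\mathbb{E}[Y(1)]-\mathbb{E}[Y(0)] = \mathbb{E}[Y(1)\mid M=1]-\mathbb{E}[Y(0)\mid M=0] = \mathbb{E}[Y\mid M=1]-\mathbb{E}[Y\mid M=0]$ in one pass, invoking ignorability directly in its unconditional form $\{Y(0),Y(1)\}\perp\!\!\!\perp M$ (stated only for $Y(1)$, used for both arms) and consistency for both arms. The two points you flag are exactly what the paper's proof uses implicitly: ignorability must be marginal, or $\mathbf{X}$ must be fixed at $(\mathbf{x}_g,E)$, and consistency must also hold at $M=0$. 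Your conditional-then-average version is therefore the same argument made explicit.
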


\begin{proof}
Under ignorability, $\mathbb{E}[Y(1) \mid M = 1] = \mathbb{E}[Y(1)]$.
Therefore:
\begin{align*}
\text{AME} &= \mathbb{E}[Y(1) - Y(0)] \\
	   &= \mathbb{E}[Y(1)] - \mathbb{E}[Y(0)] \\
           &= \mathbb{E}[Y(1) \mid M=1] - \mathbb{E}[Y(0) \mid M=0] \\
           &= \mathbb{E}[Y \mid M=1] - \mathbb{E}[Y \mid M=0]. \qedhere
\end{align*}
\end{proof}

It follows that it is possible to estimate the AME on fitness for a sufficient number of fitness observations from $N$ matched parents and mutant offspring as
\begin{equation}
\widehat{\mathrm{AME}} = \frac{1}{N} \sum_{i=1}^{N} \bigl[Y^1_i - Y^0_i \bigr] = \frac{1}{N} \sum_{i=1}^{N} \bigl[Y^{\text{off}}_i - Y^{\text{par}}_i \bigr].
\label{eq:ame_est}
\tag{7}
\end{equation}

\subsection{The Multi--Arm Structure of the Mutation Process}

The binary treatment variable $M \in \{0,1\}$ records whether a specific mutation occurs. However, different types of mutations can occur. Different amino acid substitutions at the same locus, mutations at different loci, and mutations producing different alleles all constitute distinct interventions with potentially distinct causal effects on a given genetic background $\mathbf{x}_g$.

Assuming a static environment, the average mutation effect can be rewritten in a general form to accommodate multiple mutations (corresponding to multiple intervention levels) and different genetic backgrounds (or types).
Let $\mathcal{M}_j$ denote the set of possible mutation outcomes from type $j$, where each element $m \in \mathcal{M}_j$ specifies a particular mutational change (e.g., a transition from $j$ to some type $k \neq j$).
The AME on the fitness is:
\begin{equation}
\mathrm{AME}(m, j) \equiv \mathrm{AME}(m: j \rightarrow k) = \mathbb{E}\bigl[Y(m) - Y(0) \mid \text{type } j \bigr] = \frac{1}{N} \sum_{i=1}^{N} \bigl[Y^k_i - Y^j_i \bigr],
\label{eq:ame_multi}
\tag{8}
\end{equation}
where the index $i$ runs over all matched parent--offspring pairs.

The mutation process from a parent population of type $j$ naturally gives rise to a multi--arm treatment structure. The mutation matrix provides the probabilities: $p(j \to k \mid j) = p_{jk}$, where $p_{jk}$ is the conditional probability of becoming type $k$ given that a mutation from type $j$ has occurred, with $p_{jj} = 0$ and $\sum_{k \neq j} p_{jk} = 1$. The mutation process from parent type $j$ can therefore be regarded as a \emph{randomised multi--arm treatment protocol}, where nature randomly assigns one of the treatments $m \in \mathcal{M}_j$ with probability $p_{mj} = p_{jk},\; m: j \rightarrow k$. Therefore, each distinct mutation $m$ corresponds to a distinct arm of a multi--arm trial.

For the multi--arm formulation, the identification assumptions must hold. The no--multiple--versions component of the SUTVA requires that each mutation type $m$ produces a unique fitness outcome regardless of when or in which individual it occurs. This is satisfied under a constant fitness landscape where all individuals of the same type share the same fitness. Consistency also requires \emph{well--defined interventions}, i.e., that mutations under study are uniquely defined.
Finally, positivity requires $p_{jk} > 0$ for all accessible mutation destinations $k$, which is satisfied when the mutation matrix has no structurally forbidden transitions within the accessible mutation set.

\section{Causal Effects in Models of Evolutionary Dynamics}
\label{sec:dynamics}

The causal inference nature of evolution can be identified in classical models of evolutionary dynamics.
Specifically, it can be shown that the inter--generational change in mean fitness decomposes into a selection component and a mutation component, where the latter is expressed in terms of the AME.

\subsection{The Unnormalised Discrete Quasispecies Equation}

Consider a haploid population with $n$ types.
Let $n_j(t)$ denote the number of individuals of type $j$ at generation $t$, $w_j$ the absolute (Wrightian) fitness of type $j$, and $Q_{jk}$ the mutation matrix describing the fraction of offspring from parent $j$ that become type $k$ during each replication event. The dynamics in absolute population counts are:
\begin{equation}
n_k(t+1) = \sum_j Q_{jk}\, w_j\, n_j(t).
\label{eq:quasi}
\tag{9}
\end{equation}
The total population is
\[
N(t) = \sum_j n_j(t),
\]
and the mean fitness is
\[
\bar{w}(t) = \frac{1}{N(t)} \sum_j w_j\, n_j(t).
\]
Summing Eq.~\eqref{eq:quasi} over $k$ and using $\sum_k Q_{jk} = 1$ gives the total population at time $t+1$:
\begin{equation}
N(t+1) = \sum_j w_j\, n_j(t) = \bar{w}(t)\, N(t).
\label{eq:Ntot}
\tag{10}
\end{equation}
The mutation matrix in Eq.~\eqref{eq:quasi} can be decomposed as follows:
\begin{equation}
Q_{jk} = \delta_{jk}(1 - \mu_j) + \mu_j\, q_{jk},
\label{eq:Qdecomp}
\tag{11}
\end{equation}
where $\mu_j$ is the total mutation rate away from type $j$, $q_{jk}$ is the conditional probability of becoming type $k$ given a mutation from $j$ (with $q_{jj} = 0$, $\sum_{k \neq j} q_{jk} = 1$), and $\delta_{jk}$ is the Kronecker delta. The expected fitness of the offspring of parent population type $j$, accounting for any mutation $m \in \mathcal{M}_j,\; m: j \rightarrow k$, can be written as
\begin{align}
w_j^* &= \sum_k w_k\, Q_{jk} = w_j(1 - \mu_j) + \mu_j \sum_{k \neq j} q_{jk}\, w_k \notag\\
      &= w_j + \mu_j \Bigl(\sum_{k \neq j} q_{jk}\, w_k - w_j\Bigr) \notag\\
      &= w_j + \mu_j \Bigl(\sum_{k \neq j} q_{jk}\, (w_k - w_j)\Bigr) \notag\\
      &= w_j + \mu_j \Bigl(\sum_{k \neq j} q_{jk}\, \mathrm{AME}(j \rightarrow k)\Bigr) \notag\\
      &= w_j + \sum_{m \in \mathcal{M}_j} p_{mj}\, \mathrm{AME}(m, j).
\label{eq:wjstar}
\tag{12}
\end{align}

The final term corresponds to the overall average mutation effect on fitness from any mutation originating in the parent population of type $j$. It is the weighted average of the AMEs from all matched mutant offspring--parent populations of parent type $j$, with weights equal to the corresponding mutation rates.

\begin{definition}[Total Mutation Effect]
\label{def:tme}
In a static environment and under a constant fitness landscape, the Total Mutation Effect on fitness for a parent population of type $j$, accounting for any mutation $m \in \mathcal{M}_j$, is defined as
\begin{equation}
\tau_j = \sum_{m \in \mathcal{M}_j} p_{mj}\, \mathrm{AME}(m, j),
\label{eq:tme_def}
\tag{13}
\end{equation}
where $p_{mj}$ is the fraction of the mutant offspring carrying mutation $m$ and $\mathrm{AME}(m, j)$ is the average mutation effect given mutation $m$ and parental type $j$.
\end{definition}

\subsection{Causal Decomposition of Mean Fitness Change}

\noindent Intuitively, the Total Mutation Effect $\tau_j$ summarises, for each parent type $j$, the average causal effect on fitness of all mutations that can arise from $j$, weighted by their probabilities.
Aggregating these individual-level mutation effects across the whole population yields the mutation component of the change in mean fitness, which adds to Fisher's selection term.\\
The following theorem can be proved:
\begin{theorem}[Causal decomposition of $\Delta\bar{w}$]
\label{thm:decomposition}
Under a constant fitness landscape, the change in mean fitness between generation $t$ and $t+1$ is:
\begin{equation}
\Delta\bar{w} \;=\; \frac{\mathrm{Var}(w)}{\bar{w}} \;+\; \frac{1}{\bar{w}}\,\mathbb{E}[w\,\tau],
\label{eq:decomp}
\tag{14}
\end{equation}
where $\mathrm{Var}(w)$ is the fitness variance in the population, and $\tau$ is the Total Mutation Effect between generations. All fitness values $w_j$ and Total Mutation Effects $\tau_j$ are
evaluated at the parent generation $t$.
\end{theorem}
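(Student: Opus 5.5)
Since both sides of \eqref{eq:decomp} are quadratic in the fitnesses, I will verify the identity by computing $\bar w(t+1)$ directly from the unnormalised quasispecies dynamics \eqref{eq:quasi}. Write $x_j = n_j(t)/N(t)$ for the type frequencies at generation $t$. Expectations $\mathbb{E}[\cdot]$, variances and $\bar w$ are taken over the parent generation, so $\bar w = \sum_j x_j w_j$, $\mathrm{Var}(w) = \sum_j x_j w_j^2 - \bar w^2$, and $\mathbb{E}[w\tau] = \sum_j x_j w_j \tau_j$. The fitness values $w_k$ do not change between generations, because the landscape is constant. The mean fitness at $t+1$ is therefore
\[
\bar w(t+1) = \frac{1}{N(t+1)} \sum_k w_k\, n_k(t+1).
\]

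The first step is to substitute \eqref{eq:quasi} into the numerator and exchange the order of summation:
\[
\sum_k w_k\, n_k(t+1) = \sum_j w_j\, n_j(t) \sum_k Q_{jk}\, w_k = \sum_j w_j\, n_j(t)\, w_j^*,
\]
where $w_j^*$ is the expected offspring fitness defined in \eqref{eq:wjstar}. The second step is to insert the identity \eqref{eq:wjstar} together with Definition~\ref{def:tme}, which give $w_j^* = w_j + \tau_j$. The numerator then becomes $N(t)\sum_j x_j w_j(w_j + \tau_j) = N(t)\bigl(\mathbb{E}[w^2] + \mathbb{E}[w\tau]\bigr)$. The third step uses \eqref{eq:Ntot} for the denominator, $N(t+1) = \bar w\, N(t)$. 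Dividing gives
\[
\bar w(t+1) = \frac{\mathbb{E}[w^2] + \mathbb{E}[w\tau]}{\bar w}.
\]
Subtracting $\bar w = \bar w^2/\bar w$ and writing $\mathbb{E}[w^2] - \bar w^2 = \mathrm{Var}(w)$ yields \eqref{eq:decomp}. Here $\mathrm{Var}(w)/\bar w$ is Fisher's selection term, and the remainder is the fitness-weighted Total Mutation Effect.

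No step is a real obstacle. The point needing care is bookkeeping: every quantity must be evaluated at the parent generation $t$, with weights $x_j$ and fitnesses $w_j$. Another point of care is that $\tau_j$ is exactly $\mu_j\sum_{k\neq j} q_{jk}(w_k - w_j)$, which requires the constant-landscape assumption so that $\mathrm{AME}(j\to k) = w_k - w_j$ holds for every matched pair. I will also note explicitly that $\sum_k Q_{jk}=1$ is used twice: once in \eqref{eq:Ntot} and once in deriving \eqref{eq:wjstar}. Finally, $\bar w>0$ is required so that the division is legitimate.
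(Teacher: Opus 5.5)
Your proposal is correct and follows the paper's proof essentially step for step: you expand $\bar w(t+1)$ via the unnormalised dynamics, exchange sums to obtain $\sum_j x_j w_j w_j^*$, substitute $w_j^* = w_j + \tau_j$, divide by $N(t+1)=\bar w\,N(t)$, and subtract $\bar w$. Your bookkeeping remarks (evaluating everything at the parent generation, using $\sum_k Q_{jk}=1$, and requiring $\bar w>0$) are accurate, and your explicit exchange of the summation indices is slightly cleaner than the paper's own index handling.
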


\begin{proof}
Under a constant fitness landscape:
\begin{align*}
\bar{w}(t+1) &= \frac{1}{N(t+1)} \sum_j w_j\, n_j(t+1) = \frac{1}{\bar{w}\, N} \sum_j w_j\, n_j \Bigl(\sum_k w_k\, Q_{jk}\Bigr) = \frac{1}{\bar{w}} \sum_j x_j\, w_j\, w_j^*,
\end{align*}
where $x_j = n_j/N$. Substituting $w_j^* = w_j + \tau_j$ from Eq.~\eqref{eq:wjstar} gives
\begin{align*}
\bar{w}(t+1) &= \frac{1}{\bar{w}} \sum_j x_j\, w_j (w_j + \tau_j) = \frac{1}{\bar{w}} \Bigl(\sum_j x_j\, w_j^2 + \sum_j x_j\, w_j \tau_j\Bigr) = \frac{1}{\bar{w}} \Bigl(\mathbb{E}[w^2] + \mathbb{E}[w\,\tau]\Bigr).
\end{align*}
Therefore
\begin{align*}
\Delta\bar{w} &= \bar{w}(t+1) - \bar{w}(t) = \frac{\mathbb{E}[w^2] - \bar{w}^2}{\bar{w}} + \frac{\mathbb{E}[w\,\tau]}{\bar{w}} = \frac{\text{Var}(w)}{\bar{w}} + \frac{\mathbb{E}[w\,\tau]}{\bar{w}}. \qedhere
\end{align*}
\end{proof}

The first term, $\text{Var}(w)/\bar{w}$, is the \emph{selection component} of the inter-generation average fitness variation. It is non-negative and equals zero only when all parental types have identical fitness.
The second term, $\mathbb{E}[w\,\tau]/\bar{w}$, is the \emph{mutation component} of the same quantity. It corresponds to the ratio between the fitness-weighted average of the Total Mutation Effects over all parental types and the average parental fitness.

It is easy to recognise that Theorem~\ref{thm:decomposition} corresponds to the Price equation with the character $z$ set equal to the fitness $w$. Indeed, the standard Price equation for the change in the population mean of any character $z$ is:
\begin{equation}
\bar{w}\,\Delta\bar{z} = \text{Cov}(w, z) + \mathbb{E}[w\,\Delta z],
\label{eq:price}
\tag{15}
\end{equation}
where the first term is the selection covariance and the second is the transmission-bias term.

Setting $z = w$ and noting that $\Delta z_j = w_j^* - w_j = \tau_j$ gives
\[
\bar{w}\,\Delta\bar{w} = \text{Var}(w) + \mathbb{E}[w\,\tau],
\]
which is Eq.~\eqref{eq:decomp} multiplied by $\bar{w}$.
The Price equation is a mathematical identity that holds without causal assumptions (Doebeli et al., 2017). It partitions the change in a population mean into a covariance term and a remainder, but assigns no mechanistic interpretation to either.

Here we have implicitly demonstrated that, under the assumptions of (i) constant fitness landscape (i.e., fitness values $w_j$ not changing between the parent and offspring generations) and (ii) causal identifiability discussed in Section~\ref{sec:causal}, the transmission-bias term $\mathbb{E}[w\,\Delta z]$ acquires a causal meaning and can be interpreted as the causal effect of the mutation process on the trait variation between generations, rather than a statistical residual.

\begin{corollary}[Decomposition under uniform mutation rate]
\label{cor:uniform}
If $\mu_j = \mu$ for all $j$:
\begin{equation}
\Delta\bar{w} = \frac{\mathrm{Var}(w)}{\bar{w}} + \mu\,\mathbb{E}[\gamma] + \frac{\mu}{\bar{w}}\,\mathrm{Cov}(w,\gamma),
\label{eq:corollary}
\tag{16}
\end{equation}
where $\gamma_j = \sum_{k\neq j} q_{jk}\,\mathrm{AME}(j\to k)$ is the average mutation effect per mutation event for parent type $j$, and $\mathbb{E}[\gamma] = \sum_j x_j \gamma_j$ is the population-average of the same quantity.
\end{corollary}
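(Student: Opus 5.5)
The plan is to start from Theorem~\ref{thm:decomposition} and rewrite only its mutation component. From Eq.~\eqref{eq:wjstar} and Definition~\ref{def:tme} we have $\tau_j = \mu_j \sum_{k\neq j} q_{jk}\,\mathrm{AME}(j\to k) = \mu_j\gamma_j$. Under the hypothesis $\mu_j=\mu$ for all $j$ this becomes $\tau_j=\mu\gamma_j$. The mutation component is therefore
\[
\frac{1}{\bar{w}}\,\mathbb{E}[w\,\tau] = \frac{\mu}{\bar{w}}\sum_j x_j\, w_j\,\gamma_j = \frac{\mu}{\bar{w}}\,\mathbb{E}[w\,\gamma],
\]
with all expectations taken with respect to the parental frequencies $x_j=n_j/N$, as in the proof of Theorem~\ref{thm:decomposition}.

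The second step is the standard covariance identity $\mathbb{E}[w\gamma]=\mathrm{Cov}(w,\gamma)+\mathbb{E}[w]\,\mathbb{E}[\gamma]$, using $\mathbb{E}[w]=\sum_j x_j w_j=\bar{w}$. Substituting gives
\[
\frac{\mu}{\bar{w}}\,\mathbb{E}[w\gamma] = \frac{\mu}{\bar{w}}\,\mathrm{Cov}(w,\gamma) + \mu\,\mathbb{E}[\gamma].
\]
Combining this with the unchanged selection term $\mathrm{Var}(w)/\bar{w}$ yields Eq.~\eqref{eq:corollary}.

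There is no real obstacle; the argument is a direct substitution. The one point needing care is consistency of the averaging measure. $\mathrm{Cov}$, $\mathbb{E}[\gamma]$ and $\bar{w}$ must all be computed with the same weights $x_j$ at generation $t$, which is exactly how $\mathbb{E}[w^2]$ and $\mathbb{E}[w\tau]$ were defined in Theorem~\ref{thm:decomposition}. Only because of this does $\mathbb{E}[w]$ cancel the prefactor $1/\bar{w}$ and produce the unweighted term $\mu\,\mathbb{E}[\gamma]$. I would also note that the multi-arm weights satisfy $p_{mj}=\mu_j q_{jk}$, which is what makes $\tau_j=\mu_j\gamma_j$ valid.
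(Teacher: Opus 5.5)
Your proposal is correct and matches the paper's proof: set $\tau_j=\mu\gamma_j$, split $\mathbb{E}[w\gamma]=\mathrm{Cov}(w,\gamma)+\bar{w}\,\mathbb{E}[\gamma]$ using parental weights $x_j$, and substitute into Theorem~\ref{thm:decomposition}. Your remark that the weights must be $p_{mj}=\mu_j q_{jk}$, as in Eq.~\eqref{eq:wjstar} and Proposition~\ref{prop:approx_sutva}, rather than the conditional probabilities of the multi-arm subsection, is a useful clarification. The paper's proof leaves this point implicit.
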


\begin{proof}
With $\mu_j = \mu$ we have $\tau_j = \mu\,\gamma_j$. Hence
\[
\frac{1}{\bar{w}}\mathbb{E}[w\,\tau] = \frac{\mu}{\bar{w}}\mathbb{E}[w\,\gamma] = \mu\,\mathbb{E}[\gamma] + \frac{\mu}{\bar{w}}\,\mathrm{Cov}(w,\gamma).
\]
Substituting into Theorem~\ref{thm:decomposition} yields the claimed expression.
\end{proof}

When $\mu = 0$, the first term in the decomposition recovers the selection component of Fisher's Fundamental Theorem in the context of haploid clonal types with Wrightian fitness (where the total variance and additive genetic variance coincide), confirming that the mutation--dependent terms vanish appropriately in the selection-only limit. When there is no fitness variation ($\mathrm{Var}(w) = 0$), evolution is driven purely by the average causal effect of mutation, $\Delta\bar{w} = \mu\,\mathbb{E}[\gamma]$, showing that $\mathbb{E}[\gamma]$ quantifies the directional pressure mutation exerts on fitness in the absence of selection. At mutation--selection balance ($\Delta\bar{w} = 0$), the selective advantage of fitter types (measured by $\mathrm{Var}(w)/\bar{w}$) is exactly offset by the mutational load from deleterious mutations (measured by $-\mu\,\mathbb{E}[\gamma] - \frac{\mu}{\bar{w}}\mathrm{Cov}(w,\gamma)$), expressing a fundamental evolutionary principle: the equilibrium between selection's adaptive force and mutation's disruptive pressure.

The covariance $\mathrm{Cov}(w,\gamma)$ captures whether fitter genotypes have access to more or fewer beneficial mutations. When $\mathrm{Cov}(w,\gamma) > 0$, fitter types have more beneficial mutational neighbourhoods (positive evolvability--fitness correlation). When $\mathrm{Cov}(w,\gamma) < 0$, fitter types are surrounded by deleterious mutations (diminishing returns epistasis). If $\mathrm{Cov}(w,\gamma) = 0$, the mutational neighbourhood is independent of current fitness.

\subsection{The Error Threshold of the Quasispecies Theory}
Within the proposed causal framework, the error threshold of the quasispecies theory (Eigen \& Schuster, 1977) --- defined as a critical mutation rate $\mu_c$ above which selection cannot maintain the fittest genotype, causing the quasispecies population to delocalise across sequence space and lose the genetic information of the master sequence --- emerges naturally from mutation-selection balance. At equilibrium, $\Delta\bar{w} = 0$, so Theorem~\ref{thm:decomposition} yields
\[
\frac{\mathrm{Var}(w)}{\bar{w}} + \frac{1}{\bar{w}}\mathbb{E}[w\,\tau] = 0 \quad \Rightarrow \quad \mathrm{Var}(w) = -\mathbb{E}[w\,\tau],
\]
where $\tau$ denotes the Total Mutation Effect between generations (Definition~\ref{def:tme}) and $\mathbb{E}[w\,\tau]$ represents the fitness-weighted average of $\tau_j$ over parent types.

When mutations are predominantly deleterious ($\mathbb{E}[w\,\tau] < 0$), the selective advantage of fitter types (quantified by $\mathrm{Var}(w)/\bar{w}$) must counteract the mutational load (quantified by $-\mathbb{E}[w\,\tau]/\bar{w}$). As $\mu$ increases, $\mathbb{E}[w\,\tau]$ grows more negative until the selective advantage can no longer compensate --- defining the critical mutation rate $\mu_c$. This causal interpretation of the error threshold $\mu_c$  as marker of the point at which the fitness--weighted causal cost of mutations exactly cancels the selective benefit of fitness variance follows directly from the decomposition of Theorem~\ref{thm:decomposition} and finds no parallel in the Bayesian framework.

\subsection{Extension to the Generalised Replicator--Mutator Equation}
\label{sec:replicator_mutator}

Under certain conditions, the causal decomposition of mean fitness change can be extended to the replicator--mutator equation. Compared to Eq.~\eqref{eq:quasi}, the general form of the replicator--mutator equation includes both normalised population frequencies and frequency--dependent fitnesses.
The normalised quasispecies equation (with frequency--independent fitness $w_j$) can be written as:
\begin{equation}
x_i' = \frac{1}{\bar{w}} \sum_j Q_{ji}\, w_j\, x_j.
\label{eq:rm}
\tag{17}
\end{equation}
\noindent The mean fitness $\bar{w}(t) = \sum_k w_k x_k$ is determined entirely by the population state at generation $t$, i.e., before the mutation events producing the offspring.
This makes $\bar{w}(t)$ a pre--treatment quantity.

\begin{proposition}[Conditional SUTVA under normalisation]
\label{prop:cond_sutva}
Let the fitness values $w_j$ be frequency-independent.
Conditioning on the current population state $\mathbf{x}(t)$, the normalised quasispecies equation satisfies SUTVA.
\end{proposition}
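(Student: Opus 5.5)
The plan is to check the two components of SUTVA (Assumption~\ref{as:sutva}) directly on Eq.~\eqref{eq:rm}: no interference between units, and a single version of each treatment level. The natural unit is a single replication event. A parent of type $j$ produces an offspring whose type is drawn according to row $j$ of $Q$, and the outcome is that offspring's fitness. The first step is therefore to write the potential outcomes of a unit with parent type $j$ under mutation arm $m: j\to k$ as $Y(m)=w_k$, and under no mutation as $Y(0)=w_j$. Because the $w_j$ are frequency-independent, these values are fixed constants of the landscape. They do not depend on which arm any other unit received, nor on the offspring counts.

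The second step deals with the only channel that could couple units, namely the normaliser $\bar w$. The realised frequency $x_i'$ depends on every unit through $\bar w(t)=\sum_k w_k x_k(t)$. But, as remarked just before the proposition, $\bar w(t)$ is a function of $\mathbf{x}(t)$ alone, so it is a pre-treatment quantity. Conditioning on $\mathbf{x}(t)$ therefore makes $\bar w(t)$ a constant. Eq.~\eqref{eq:rm} then becomes a fixed linear map: the contribution of parent type $j$ to $x_i'$ is $w_j x_j Q_{ji}/\bar w(t)$, and it does not involve the treatments assigned to other units. Formally, I would show that for any two assignment vectors that differ only outside unit $u$, the potential outcome of unit $u$ is unchanged. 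This follows because neither $w$ nor $\bar w(t)$ depends on the offspring generation, which is the conditional no-interference statement.

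The third step is the no-multiple-versions part. Under a constant, frequency-independent landscape, every realisation of the arm $j\to k$ yields the same fitness $w_k$, whichever parent it occurs in and whatever the rest of the population is doing. This is exactly the condition already stated in the multi-arm discussion, so it can be invoked directly.

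I expect the main obstacle to be making precise the sense in which normalisation does not count as interference. Unconditionally, the offspring frequencies are coupled through the constraint $\sum_i x_i'=1$, so one unit's mutation does change the other units' \emph{frequencies}. The resolution I would argue is that the outcome in the potential-outcomes framework is fitness, not frequency. Frequencies are only downstream aggregates obtained by dividing by the pre-treatment constant $\bar w(t)$. This is why the proposition is stated conditionally on $\mathbf{x}(t)$. Without frequency independence, $w_k$ would depend on $\mathbf{x}(t+1)$, which is post-treatment, and the argument would fail. I would note this to show that the hypothesis is essential.
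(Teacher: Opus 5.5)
Your proposal is correct and rests on the same key point as the paper's proof: under frequency-independent fitness, $\bar w(t)=\sum_k w_k x_k(t)$ is a pre-treatment quantity, so conditioning on $\mathbf{x}(t)$ fixes the normaliser, and no offspring's mutation status can affect another unit's outcome. The paper's proof makes only this normaliser argument. Your explicit check of the no-multiple-versions component (via the constant landscape, as in the multi-arm discussion) and your distinction between fitness and frequency as the unit-level outcome are useful elaborations of the same approach, not a different route.
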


\begin{proof}
Under frequency-independent fitness, $\bar{w}(t) = \sum_k w_k x_k(t)$ and all $x_k(t)$ are determined by the population at generation $t$.
Mutations occur during reproduction and affect only the offspring generation. Conditional on $\mathbf{x}(t)$, the denominator $\bar{w}(t)$ is fixed and does not vary with any individual offspring's mutation status.
\end{proof}

It follows that, under frequency-independent fitness, Eq.~\eqref{eq:decomp} holds identically for the normalised quasispecies equation. It is important to note that Proposition~\ref{prop:cond_sutva} establishes SUTVA \emph{within} a single generation, conditional on the current population state. Across generations, the offspring's mutation status at generation $t$ affects $\mathbf{x}(t+1)$, which in turn determines $\bar{w}(t+1)$, potentially creating inter--generational interference.
Here, each generation is treated as an independent cross--sectional causal experiment, identified conditional on the population state at that generation. This is the natural approach for discrete--generation models and is sufficient for the decomposition theorems, which concern single-generation changes. Accounting for a fully longitudinal intervention of sequential mutation events across multiple generations would require, in principle, time--varying treatment methods such as marginal structural models or g--computation (Robins, 1986; Hern\'{a}n \& Robins, 2006), which falls beyond the scope of this paper.\\

In the generalised replicator--mutator equation, fitness depends on the population composition:
\begin{equation}
x_i' = \frac{1}{\bar{w}(\mathbf{x})} \sum_j Q_{ji}\, w_j(\mathbf{x})\, x_j.
\tag{GRM}
\label{eq:grm}
\end{equation}
Therefore, a mutation in individual type $j$'s offspring that changes population composition also changes the fitness values of other types, resulting in a SUTVA violation.

However, when fitness varies slowly with composition and mutations are rare (weak frequency dependence) we have that
\[
w_j(\mathbf{x}') - w_j(\mathbf{x}) = O(\mu/N) \quad \text{per individual mutation event}.
\]

\begin{proposition}[SUTVA under weak frequency dependence]
\label{prop:approx_sutva}
Under frequency-dependent fitness with bounded sensitivity $\left\|\partial w_j / \partial x_k\right\| \leq C$ for all $j,k$ and uniform mutation rate $\mu$, the causal decomposition holds up to $O(\mu^2)$:
\[
\Delta\bar{w} = \frac{\mathrm{Var}(w(\mathbf{x}))}{\bar{w}(\mathbf{x})} + \frac{1}{\bar{w}(\mathbf{x})}\,\mathbb{E}\bigl[w(\mathbf{x})\,\tau(\mathbf{x})\bigr] + R(\mu^2),
\]
where:
\begin{itemize}
\item All fitness values $w_j(\mathbf{x})$ and Total Mutation Effects $\tau_j(\mathbf{x})$ are evaluated at the parent-generation state $\mathbf{x} = \mathbf{x}(t)$
\item $\tau_j(\mathbf{x}) = \sum_{m \in \mathcal{M}_j} p_{mj} \,\mathrm{AME}(m,j;\mathbf{x})$ is the Total Mutation Effect for parent type $j$, with $\mathrm{AME}(m,j;\mathbf{x})$ computed under the fixed fitness landscape $w_j(\mathbf{x})$
\item $p_{mj} = \mu_j q_{jk}$ is the fraction of offspring from type $j$ that are mutants of type $k$ (where $m: j \rightarrow k$)
\end{itemize}
The remainder satisfies
\begin{equation}
|R(\mu^2)| \;\leq\; \frac{n\, C\, \mu^2\, w_{\max}}{\bar{w}(\mathbf{x})}\,
\bigl(\max_j |\tau_j(\mathbf{x})|\bigr),
\label{eq:error_bound}
\end{equation}
with $n$ the number of types and $w_{\max} = \max_j (w_j)$.
\end{proposition}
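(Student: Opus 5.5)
The plan is to compare the generalised dynamics \eqref{eq:grm} with a \emph{frozen-landscape} surrogate, in which every fitness is held at its parent-generation value $w_j(\mathbf{x})$, $\mathbf{x}=\mathbf{x}(t)$. For this frozen system, Proposition~\ref{prop:cond_sutva} applies: conditional on $\mathbf{x}(t)$ the landscape is constant. Theorem~\ref{thm:decomposition} then holds verbatim for the normalised quasispecies equation \eqref{eq:rm} with $w_j=w_j(\mathbf{x})$. Concretely, I define
\[
\bar w^{\mathrm{frz}}(t+1)=\sum_i w_i(\mathbf{x})\,x_i',
\qquad
w_j^*(\mathbf{x})=\sum_k w_k(\mathbf{x})Q_{jk}=w_j(\mathbf{x})+\tau_j(\mathbf{x}),
\]
using Eq.~\eqref{eq:wjstar} with $p_{mj}=\mu q_{jk}$. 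Repeating the computation in the proof of Theorem~\ref{thm:decomposition} then gives
\[
\bar w^{\mathrm{frz}}(t+1)-\bar w(\mathbf{x})=\frac{\mathrm{Var}(w(\mathbf{x}))}{\bar w(\mathbf{x})}+\frac{1}{\bar w(\mathbf{x})}\mathbb{E}[w(\mathbf{x})\tau(\mathbf{x})]
\]
exactly. The whole remainder is therefore
\[
R=\bar w(\mathbf{x}')-\bar w^{\mathrm{frz}}(t+1)=\sum_i x_i'\bigl(w_i(\mathbf{x}')-w_i(\mathbf{x})\bigr),
\]
and the task reduces to bounding this landscape-shift term.

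Next I split $\mathbf{x}'-\mathbf{x}$ into a selection part and a mutation part. Write $\tilde x_i=x_iw_i/\bar w$ for the pure-replicator update; then $x_i'-\tilde x_i=\frac{\mu}{\bar w}\sum_j w_jx_j(q_{ji}-\delta_{ji})$, which is $O(\mu)$ componentwise. Under the reading the paper adopts (the SUTVA discussion isolates the interference created by mutation events, with $w_j(\mathbf{x}')-w_j(\mathbf{x})=O(\mu/N)$), the interference attributable to mutation is the shift $\mathbf{x}'-\tilde{\mathbf{x}}$. The selection-driven shift is part of the parent-generation composition effect and is absorbed into evaluating at $\mathbf{x}(t)$. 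I would state this convention explicitly, since without it $R$ contains an $O(C)$ selection term that is not $O(\mu^2)$. By the mean value theorem and the sensitivity bound $|\partial w_i/\partial x_k|\le C$, summing over $k$ gives $|w_i(\mathbf{x}')-w_i(\tilde{\mathbf{x}})|\le nC\max_k|x_k'-\tilde x_k|$.

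The key refinement for the stated bound is to express the mutational shift through $\tau$ rather than through raw $\mu$. The change in the fitness of the offspring pool caused by mutation is $\sum_j x_jw_j\tau_j/\bar w$, and each component of $\mathbf{x}'-\tilde{\mathbf{x}}$ weighted against fitness is controlled by $\mu\max_j|\tau_j|$ up to factors $w_{\max}/\bar w$. Combining this with the first-order factor $\mu$ coming from the fraction of offspring that are mutants, I expect
\[
|R|\le\sum_i x_i'\,nC\,\mu\,\frac{w_{\max}}{\bar w}\,\mu\max_j|\tau_j(\mathbf{x})|,
\]
which is \eqref{eq:error_bound} after using $\sum_i x_i'=1$.

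The main obstacle is precisely this step: converting the componentwise shift $|x_k'-\tilde x_k|$, which is naturally bounded by $\mu\,w_{\max}/\bar w$, into a bound carrying a factor $\mu\max_j|\tau_j|$. If the direct estimate does not produce the $\tau$ factor, I would first try a second-order Taylor expansion of $w_i$ around $\mathbf{x}$ and pair the first-order term with $\tau$ via Eq.~\eqref{eq:wjstar}. Failing that, I would accept a bound of the form $nC\mu^2 w_{\max}^2/\bar w^2$, still $O(\mu^2)$, and note the discrepancy with the stated constant.
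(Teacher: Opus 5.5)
\textbf{There is a gap: even after you set the selection shift aside, the remainder is first order in $\mu$, and no $O(\mu^2)$ bound can be extracted.}

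Your reduction is sound and matches the paper's. Freezing the landscape at $\mathbf{x}=\mathbf{x}(t)$ makes Theorem~\ref{thm:decomposition} exact, so the whole remainder is $R=\sum_i x_i'\bigl(w_i(\mathbf{x}')-w_i(\mathbf{x})\bigr)$. You are also right that the selection-driven shift $\tilde{\mathbf{x}}-\mathbf{x}$ contributes a piece of order $C$, not $O(\mu^2)$. The paper's proof drops this piece when it asserts $\|\mathbf{x}'-\mathbf{x}\|=O(\mu)$.

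The gap is your next step. The ``first-order factor $\mu$ coming from the fraction of offspring that are mutants'' is the same $\mu$ already contained in $\mathbf{x}'-\tilde{\mathbf{x}}$, so you have counted it twice. The weights $x_i'$ in $R$ sum to one, so nothing supplies a second power of $\mu$. Nothing ties $R$ to $\tau$ either: $\tau$ measures fitness differences between types on the frozen landscape, while $R$ measures how the landscape itself moves.

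A concrete example shows this. Take $w_i(\mathbf{x})=1+c\,x_1$ for every $i$, with $c>0$ (so $C=c$), and start from the monomorphic state $\mathbf{x}=\mathbf{e}_1$. Then $\tilde{\mathbf{x}}=\mathbf{x}$, $\mathrm{Var}(w)=0$ and $\tau_j\equiv 0$. Meanwhile $x_1'=1-\mu$ gives $\Delta\bar{w}=\bar{w}(\mathbf{x}')-\bar{w}(\mathbf{x})=-c\mu$. So $R=-c\mu$ is first order in $\mu$ even with no selection shift at all. The stated bound forces $R=0$. Your fallback $|R|\le nC\mu^2 w_{\max}^2/\bar{w}^2=nc\mu^2$ also fails once $\mu<1/n$. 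Under the natural reading of $\Delta\bar{w}$, the proposition is false as stated, even under your convention.

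Following the paper will not rescue this. Its proof gets the stated constant by bounding the correction to $w_j^*$ by $\mu\cdot nC\cdot\mu\cdot\max_j|\tau_j|$ without derivation. Writing $w_j^*(\mathbf{x}')=w_j(\mathbf{x}')+\tau_j(\mathbf{x}')$ shows that this correction contains $w_j(\mathbf{x}')-w_j(\mathbf{x})$, which is first order, so that bound is invalid.

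What your setup does establish is
\[
|R|\le C\,\|\mathbf{x}'-\mathbf{x}\|_1\le C\Bigl(\sum_i x_i\,\frac{|w_i-\bar{w}|}{\bar{w}}+2\mu\Bigr).
\]
This follows from the mean value theorem along the segment in the simplex, together with $\|\mathbf{x}'-\tilde{\mathbf{x}}\|_1\le 2\mu$. The latter comes directly from your own formula $x_i'-\tilde{x}_i=\mu\bigl(\sum_j\tilde{x}_j q_{ji}-\tilde{x}_i\bigr)$.

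So the remainder is small because frequency dependence is weak (small $C$), with mutation contributing at order $C\mu$ rather than $\mu^2$. That corrected statement is what you can prove. Present it in place of a convention that redefines what $R$ is.
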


\begin{proof}
Under the stated conditions, a first-order Taylor expansion of $w_j(\mathbf{x}')$ around $\mathbf{x}$ gives:
\[
w_j(\mathbf{x}') = w_j(\mathbf{x}) + \sum_k \frac{\partial w_j}{\partial x_k}\bigl(x_k' - x_k\bigr) + O(\|\mathbf{x}' - \mathbf{x}\|^2).
\]
Since each individual mutation event shifts $x_k$ by $O(\mu/N)$ and there are $O(N)$ independent replication events, the aggregate frequency shift satisfies $\|\mathbf{x}' - \mathbf{x}\| = O(\mu)$ for each $k$. The first-order correction to $w_j$ is therefore bounded by $n \cdot C \cdot O(\mu)$.

In the proof of Theorem~\ref{thm:decomposition}, the term $\bar{w}(t+1) = \frac{1}{\bar{w}} \sum_j x_j w_j w_j^*$ uses $w_j^*$ evaluated at the parental generation fitness landscape. Substituting $w_j(\mathbf{x})$ for $w_j(\mathbf{x}')$ in computing $\bar{w}(t+1)$ incurs an error
\[
|R| \leq \frac{1}{\bar{w}} \sum_j x_j w_j \cdot |\text{correction to } w_j^*| \leq \frac{1}{\bar{w}} \cdot w_{\max} \cdot \mu \cdot n C \cdot \mu \cdot \max_j |\tau_j(\mathbf{x})|,
\]
resulting in the bound in Eq.~\eqref{eq:error_bound}. For biologically realistic parameters based on experimental evolution and mutation--rate studies with microbes (e.g. Drake, 1991, Lenski et al., 1991, Tenaillon et al., 2016) --- per--genome mutation rates in non-hypermutator \textit{E.\ coli} strains typically in the range $\mu \sim 10^{-4}$\text{--}$10^{-3}$ per generation, frequency-dependent selection strength $\left|\partial w_j / \partial x_k\right| \sim 10^{-2}$ to $10^{-3}$ per 0.01 change in genotype frequency, and number of distinct genotypes $n \sim O(10)$ --- we obtain, taking the conservative value $\mu \sim 10^{-3}$,
\[
|R| \;\leq\;
\frac{(10)\cdot(10^{-2})\cdot(10^{-3})^2\cdot w_{\max}}{\bar{w}}\,
\max_j |\tau_j|
\;\leq\; 10^{-7}\,\frac{w_{\max}}{\bar{w}}\,
\max_j |\tau_j|.
\]
Since $\frac{w_{\max}}{\bar{w}}$ and $\max_j |\tau_j|$ are typically $O(1)$ (fitness effects and mutation effects are rarely orders of magnitude larger than mean fitness), the remainder $|R|$ is at most $\sim 10^{-7}$ in absolute terms. Given that the leading--order terms $\frac{\mathrm{Var}(w)}{\bar{w}}$ and $\frac{1}{\bar{w}}\mathbb{E}[w\,\tau]$ are themselves $O(10^{-2})$ to $O(1)$ for evolving populations, this error is negligible --- about five to seven orders of magnitude smaller than the smallest expected signal in typical evolutionary dynamics.
\end{proof}
\section{Updating in Replicator--Mutator Dynamics}
\label{sec:updating}

In this section the precise nature of the deviation from Bayesian updating when mutation is included in the pure replicator equation is shown to be captured by the Average Mutation Effect.

\subsection{The HMM Structure of the Replicator--Mutator Equation}
\label{sec:hmm}

The replicator--mutator dynamics in Eq.~\eqref{eq:grm} can be decomposed into two sequential operations:

\paragraph{Step 1 --- Selection}
\begin{equation}
\tilde{x}_j = \frac{w_j\, x_j}{\bar{w}};
\label{eq:stepA}
\tag{18}
\end{equation}
this is a Bayesian update with prior $x_j$ and likelihood $w_j$. The resulting distribution $\tilde{\mathbf{x}}$ is the ``Bayesian posterior'' after observing one generation of selection.

\paragraph{Step 2 --- Mutation}
\begin{equation}
x_i(t+1) = \sum_j \tilde{x}_j\, Q_{ji};
\label{eq:stepB}
\tag{19}
\end{equation}
this is a Markov transition applied to the post-selection distribution, spreading probability mass from each type to its mutational neighbours.

We can also implement the same dynamics following a \emph{mutation-first, selection-second} operation (sometimes referred to as \emph{mutator--replicator} dynamics) by propagating frequencies through the mutation matrix (transition kernel) first and then weighting by fitness (likelihood):

\paragraph{Step 1 --- Mutation (Transition kernel)}
\begin{equation}
\tilde{x}_i = \sum_j Q_{ij}\, x_j(t);
\label{eq:mut_first}
\tag{20}
\end{equation}

\paragraph{Step 2 --- Selection (Bayesian update)}
\begin{equation}
x_i(t+1) = \frac{w_i\, \tilde{x}_i}{\tilde{\bar{w}}}, \qquad \tilde{\bar{w}} = \sum_k w_k\, \tilde{x}_k.
\label{eq:sel_second}
\tag{21}
\end{equation}

Aky{\i}ld{\i}z (Aky{\i}ld{\i}z, 2017) recognised that the replicator--mutator equations correspond to the \emph{prediction step} of a Hidden Markov Model (HMM) while the mutator--replicator equations correspond to a \emph{filtering step} of a HMM. Indeed, in the HMM framework, a hidden state $\theta_t$ evolves according to a transition kernel $K(\theta_t \mid \theta_{t-1})$, and at each time step an observation $e_t$ is emitted with likelihood $\phi(\theta_t) = P(e_t \mid \theta_t)$. The predictive step can then be written as:
\begin{equation}
P(\theta_t = i \mid e_{1:t-1}) = \sum_j P(\theta_t = i \mid \theta_{t-1} = j)\, P(\theta_{t-1} = j \mid e_{1:t-1}).
\label{eq:hmm_pred}
\tag{22}
\end{equation}
The filtering recursion --- which computes the posterior distribution of the hidden state given all observations --- is given by:
\begin{equation}
P(\theta_t = i \mid e_{1:t}) \propto P(e_t \mid \theta_t = i) \sum_j K(i \mid j)\, P(\theta_{t-1} = j \mid e_{1:t-1}).
\label{eq:hmm_filter}
\tag{23}
\end{equation}

\subsection{Causal Identification in the Mutator--Replicator Equation}
\label{sec:mutator_replicator}

The replicator--mutator dynamics can be implemented in two equivalent ways, either selection first followed by mutation, or mutation first followed by selection. While the dynamics are mathematically equivalent, the mutation-first ordering makes the causal structure more transparent as in nature the mutational intervention precedes the selective evaluation of its fitness consequences.
When this order is considered, it can be shown that the updating rule for mutant genotype frequencies admits a direct causal decomposition.

Consider a monomorphic parent population of type $j$ so that $x_j(t=0) = 1$ and all other $x_k(t=0) = 0$, with a small mutation rate $\mu_j \ll 1$.
For a specific mutant type $i \neq j$, let $q_{ji}$ denote the conditional probability of producing type $i$ given that a mutation occurs from $j$, so that the probability of a $j \to i$ mutation is $\mu_j q_{ji}$.
Since all offspring of type $j$ that mutate become type $i$ with probability $q_{ji}$, after the mutation step occurs we have
\[
\tilde{x}_j = 1 - \mu_j, \qquad \tilde{x}_i = \mu_j q_{ji}.
\]
After the selection step occurs we have:
\begin{equation}
x_i(t+1) = \frac{w_i\, \tilde{x}_i}{\tilde{\bar{w}}} = \frac{w_i\, \mu_j\, q_{ji}}{w_j(1-\mu_j) + w_i\, \mu_j\, q_{ji}},
\label{eq:xi_exact}
\tag{24}
\end{equation}
\begin{equation}
x_j(t+1) = \frac{w_j\, \tilde{x}_j}{\tilde{\bar{w}}} = \frac{w_j(1-\mu_j)}{w_j(1-\mu_j) + w_i\, \mu_j\, q_{ji}}.
\label{eq:xj_exact}
\tag{25}
\end{equation}

Because $\mu_j \ll 1$, to first order in $\mu_j$ we have $\tilde{\bar{w}} \approx w_j$ and therefore
\begin{equation}
\boxed{x_i(t+1) \approx \frac{w_i}{w_j}\, \mu_j\, q_{ji} \quad \text{to } O(\mu_j).}
\label{eq:xi_approx}
\tag{26}
\end{equation}
\begin{equation}
\boxed{x_j(t+1) \approx 1 - \mu_j\, \frac{w_i\, q_{ji}}{w_j} \quad \text{to } O(\mu_j).}
\label{eq:xj_approx}
\tag{27}
\end{equation}

\begin{proof}
To obtain the approximations we expand the denominators in Eqs.~\eqref{eq:xi_exact} and~\eqref{eq:xj_exact}.
For the offspring frequency:
\[
x_i(t+1) = \frac{w_i \mu_j q_{ji}}{w_j(1-\mu_j) + w_i \mu_j q_{ji}}
= \frac{w_i \mu_j q_{ji}}{w_j}\,
\frac{1}{1 - \mu_j + \frac{w_i}{w_j}\mu_j q_{ji}}.
\]
Since $\mu_j \ll 1$, we use $(1+\epsilon)^{-1} \approx 1-\epsilon$ with
$\epsilon = -\mu_j + \frac{w_i}{w_j}\mu_j q_{ji} = O(\mu_j)$, obtaining
\[
x_i(t+1) = \frac{w_i \mu_j q_{ji}}{w_j}\bigl(1 + O(\mu_j)\bigr)
= \frac{w_i}{w_j}\, \mu_j\, q_{ji} + O(\mu_j^2).
\]
For the parent frequency, we compute $1 - x_j(t+1)$ from Eq.~\eqref{eq:xj_exact}:
\[
1 - x_j(t+1) =
\frac{w_i\, \mu_j\, q_{ji}}{w_j(1-\mu_j) + w_i\, \mu_j\, q_{ji}}
= \frac{\frac{w_i}{w_j}\mu_j q_{ji}}{1 - \mu_j + \frac{w_i}{w_j}\mu_j q_{ji}}.
\]
Again the denominator is $1 + O(\mu_j)$, so
\[
1 - x_j(t+1) = \frac{w_i}{w_j}\mu_j q_{ji} + O(\mu_j^2),
\]
which yields Eq.~\eqref{eq:xj_approx}.
\end{proof}

We call the ratio $w_i/w_j$ the \emph{mutational fitness ratio} (MFR) for the mutation $j\to i$.
It expresses the effect of the mutation on fitness of the in a multiplicative form, whereas the Average Mutation Effect $\mathrm{AME}(j\to i)\approx w_i-w_j$ expresses the same effect additively.
Indeed, $w_i/w_j = 1 + \mathrm{AME}(j\to i)/w_j$, showing that the MFR is a rescaled version of the AME.

\subsubsection{Extension to $k$ Offspring Types}
For a parent population $j$ which can produce $k$ mutant types $i_1,\ldots,i_k$ with probabilities $Q_{i_l j} = \mu_j\, q_{j,i_l}$, with mutation rate with $\mu_j \ll 1$, the first-generation frequency of each mutant type is
\begin{equation}
\boxed{x_{i_l}(t+1) \approx \frac{w_{i_l}}{w_j}\, \mu_j\, q_{j,i_l} \quad \text{to } O(\mu_j).}
\label{eq:xi_multi}
\tag{28}
\end{equation}
The parent frequency is
\begin{equation}
\boxed{x_j(t+1) \approx 1 - \mu_j\, \sum_{l=1}^{k} \frac{w_{i_l}}{w_j}\, q_{j,i_l} \quad \text{to } O(\mu_j).}
\label{eq:xj_multi}
\tag{29}
\end{equation}

The proof follows the same steps as above, noting that after mutation the total mutant frequency is $\sum_l \mu_j q_{j,i_l} = \mu_j\sum_l q_{j,i_l} = \mu_j$ (since $\sum_l q_{j,i_l}=1$), and that selection acts independently on each mutant type to first order in $\mu_j$. By defining the population-averaged MFR, $\mathrm{\overline{MFR}}_j = \sum_l (w_{i_l}/w_j)\, q_{j,i_l}$, we can write the parent frequency compactly as 
\begin{equation}
x_j(t+1) \approx 1 - \mu_j\, \mathrm{\overline{MFR}}_j.
\tag{30}
\end{equation}

These expressions show that, to first order in the mutation rate, the mutator--replicator dynamics yields a frequency update in which each mutant type's excess frequency over the neutral expectation $\mu_j q_{j,i_l}$ is proportional to the mutational fitness ratio $w_{i_l}/w_j$, and hence to the Average Mutation Effect $\mathrm{AME}(j \to i_l)$. When these individual contributions are aggregated over all parent types and mutation outcomes, they generate the mutation term $\mathbb{E}[w\,\tau]/\bar{w}$ in Theorem~\ref{thm:decomposition}, so that the population--level change in mean fitness is expressed directly in terms of the same causal mutation effects that control the first--generation mutant frequencies.

\section{Discussion}
\label{sec:discussion}

This work began by challenging the recently--proposed view that evolution constitutes a process of Bayesian learning. This view originated from the elegant recognition of the algebraic equivalence between the pure replicator equation and Bayes' theorem. However, it was also recognised that the Bayesian update structure collapses when mutations are introduced in the replicator dynamics.

Here, an alternative view is developed where evolution by natural selection is assimilated to a process of causal inference. The key conceptual contribution is the identification of mutation events as individual--level natural experiments: when a haploid parent replicates and produces a mutant offspring, the parent--offspring pair constitutes a naturally matched control--treated unit, and the difference in fitness between them measures the individual effect of the mutation on fitness for the given parental genetic background and environment. In this fashion, mutations act at the population level as interventions whose average causal effect on fitness drive evolutionary dynamics.

Conceptually, the proposed causal--evolutionary framework complements Otsuka's structural interventionist analysis of evolutionary genetics (Otsuka, 2016). Otsuka developed causal models for how interventions on population--genetic parameters (e.g., fitnesses, recombination rates, or selection regimes) influence evolutionary outcomes, operating primarily at the level of hypothetical $\mathrm{do}$--operations on such parameters. By contrast, the present approach takes as its primitive causal events the mutational changes in individual organisms, treating them as natural experiments where mutations serve as interventions and fitness as the outcome measure. It formulates causal effects directly in terms of potential outcomes for individual parent--offspring pairs and then aggregates these to population-level decompositions. 

Crucially, the proposed theoretical framework is not founded at the scale of infinite populations of replicators and is not limited to fitness. For any heritable trait $z$, the intergenerational change in its population mean decomposes into a selection term $\mathrm{Cov}(w,z)/\bar{w}$ and a mutation term $\frac{1}{\bar{w}}\mathbb{E}[w\,\zeta]$, where $\zeta_j$ denotes the average mutation effect on $z$ for parent type $j$, allowing direct application to quantitative genetics and phenotypic evolution.  The mutation term $\frac{1}{\bar{w}}\mathbb{E}[w\,\zeta]$ provides a causal interpretation of the mutational input to phenotypic change that extends the classical additive--variance framework: rather than entering as a statistical residual, mutational pressure is now expressed as the fitness--weighted average of individual mutation effects on the trait, making its directional contribution to evolutionary change explicit and estimable from matched parent--offspring data.

The exact decomposition of Theorem~\ref{thm:decomposition} is derived for the deterministic (infinite population) quasispecies equation. In finite populations, genetic drift introduces stochastic fluctuations in mean fitness that are not captured by the causal mutation term alone. Extending the causal decomposition to finite populations would require accounting for the sampling variance of causal effects across generations, and represents a relevant direction for future work.

To develop the theory, the identification assumptions for the average causal effect of the mutations were analysed. Most importantly, the main fitness definitions satisfying SUTVA and the main mutation mechanisms fulfilling ignorability were reported in Appendix. These conditions hold for point mutations in haploid asexual populations inhabiting stable environments, which was the focus of this work. In this setting, the matched parent--offspring design is valid and fitness reflects individual traits without interference, allowing for a relatively simple mathematical treatment.
When these assumptions are relaxed --- such as in frequency--dependent fitness, rapidly changing environments, or the presence of epistasis and pleiotropy --- the proposed framework would require extension. Frequency--dependent fitness and environmental variation introduce interference and weaken the matched-pair design, necessitating the application of more advanced concepts of causal inference developed to address the presence of interference or time--varying treatments. Similarly, recombination, epistasis, and population structure would test the robustness of the causal interpretation, as they alter the relationship between individual mutational effects and population--level dynamics.

In conclusion, the proposed framework complements structural--causal analyses of selection and heredity by focusing on mutational interventions as primitive causal events, offering a complementary perspective that could be extended and integrated into a more unified causal theory of evolutionary dynamics.

\appendix

\section{Fitness Definitions and SUTVA}
\label{sec:sutva}

In the causal framework of Section~\ref{sec:causal}, fitness definitions are interpreted as functionals of an individual's reproductive output over a specified time window. For matched parent--offspring pairs, both fitnesses are evaluated over the same forward time interval starting at the offspring generation time $T_0$, so that they provide temporally aligned outcomes for the individual causal contrast $Y^{\text{off}} - Y^{\text{par}}$ defined in Eq.~(4). In this appendix I classify standard fitness concepts according to whether they satisfy SUTVA (Assumption~\ref{as:sutva}), thereby indicating when the identification theorem applies directly and when frameworks for causal inference with interference are required.

Throughout, SUTVA requires that each unit's potential outcomes $\{Y_i(0), Y_i(1)\}$ depend only on its own treatment status, and not on the treatments of other units. Fitness definitions that are strictly individual-level functionals of the life-history schedule and the external environment satisfy SUTVA, and Theorem~\ref{thm:identification} can be applied directly. Fitness definitions that depend on population composition, social partners, or group structure exhibit interference and require extended causal frameworks accounting for interference (Hudgens \& Halloran, 2008; Tchetgen Tchetgen \& VanderWeele, 2012).

\subsection{Definitions That Satisfy SUTVA}

The following fitness definitions assign a value to an individual based only on its own traits and the external environment, with no dependence on other units' genotypes or mutations.

\paragraph{Individual viability.}
Under pure viability selection, genotypic fitness is the probability of survival to reproductive age:
\[
V_g = P(\text{survive to reproduce} \mid \text{genotype } g).
\]
In models where survival to reproductive age is determined solely by the focal individual's genotype and the external environment, this viability is an individual-level quantity and satisfies SUTVA, since one individual's potential survival outcome does not depend on the mutation status of others.

\paragraph{Wrightian (absolute) fitness.}
Absolute fitness $w_i$ is the expected number of offspring contributed to the next generation by individual $i$. When $w_i$ is modelled as a function only of the focal individual's traits and the external environment, and not of the genotypes or mutation status of other units, each unit's potential outcomes depend only on its own treatment, and SUTVA is satisfied by construction.

\paragraph{Malthusian fitness.}
For continuously growing populations $N(t)$, Malthusian fitness is the per-capita growth rate $m$, where $N(t) = N(0)\, e^{mt}$. In a fixed environment with no density dependence, $m$ is a property of the individual life-history schedule and the external environment, so that the potential growth outcomes of one individual do not depend on others' mutation status. SUTVA holds in this setting.

\paragraph{Geometric mean fitness.}
In temporally fluctuating environments, the geometric mean fitness
\[
G = (w_1 w_2 \cdots w_T)^{1/T}
\]
is defined per genotype across time (Lewontin \& Cohen, 1969; Gillespie, 1973). It is constructed from the sequence of fitnesses experienced by the focal type across generations and does not involve interactions between individuals within a generation. Consequently, the potential values of $G$ for individual $i$ depend only on $i$'s own treatment status and the environmental sequence, and SUTVA is satisfied.

\paragraph{Euler--Lotka fitness.}
In age-structured populations, fitness is often defined as the intrinsic rate of increase $r$, found by solving the Euler--Lotka equation
\[
1 = \int_0^{\infty} l(x)\, m(x)\, e^{-rx}\, dx,
\]
where $l(x)$ is the survival function and $m(x)$ is age-specific fecundity (Charlesworth, 1994). In a fixed environment without frequency- or density-dependent interactions, $r$ is an individual-level functional of the life-history schedule $(l,m)$. Under such assumptions, an individual's potential $r$ depends only on its own treatment status, and SUTVA holds.

\subsection{Definitions That Violate SUTVA}

The following fitness definitions make the outcome of individual $i$ depend on the treatment status (genotype or mutation) of other individuals. When considering these fitness definitions, one should use frameworks for causal inference with interference, which define estimands and estimators that remain well-posed when each unit's potential outcomes may depend on others' treatments (Hudgens \& Halloran, 2008; Tchetgen Tchetgen \& VanderWeele, 2012). 

\paragraph{Frequency-dependent Wrightian fitness.}
When Wrightian fitness is frequency-dependent, we write $w_i = w(g_i, \mathbf{x})$, where $\mathbf{x}$ is the population composition (e.g. genotype frequencies). A mutation in individual $j$ that changes $\mathbf{x}$ also changes individual $i$'s fitness, so that $i$'s potential outcomes depend on $j$'s treatment; this constitutes interference and violates SUTVA. Experimental evolution studies increasingly report frequency-dependent fitness effects for beneficial mutations, indicating that such SUTVA violations are likely to be common for this class of fitness measures. Concrete examples include microbial competition assays in which the selection coefficient of a mutant changes with its initial frequency in the population.

\paragraph{Inclusive fitness.}
Hamilton's inclusive fitness (Hamilton, 1964) explicitly incorporates effects on relatives:
\[
w_{i}^{\text{incl}} = w_{i} + \sum_{j \in \text{relatives}} r_j \Delta w_j,
\]
where $r_j$ is a relatedness coefficient and $\Delta w_j$ is the mutation-induced change in relative $j$'s fitness. A mutation in individual $j$ that changes $j$'s behaviour toward $i$ alters $i$'s inclusive fitness directly. Thus, $i$'s potential inclusive-fitness outcome depends on $j$'s treatment status, violating SUTVA.

\paragraph{Invasion fitness.}
In adaptive dynamics (Metz et al., 1992), invasion fitness $f(x,y)$ is defined as the long-term growth rate of a rare mutant $y$ in a resident population at equilibrium $x$:
\[
f(x, y) = \lim_{T \to \infty} \frac{1}{T} \ln \frac{N_y(T)}{N_y(0)}.
\]
This is inherently a two-type quantity: the mutant's outcome depends on the resident's equilibrium state $x$, which in turn reflects the population's genetic composition. A change in the resident genotype $x$ due to other mutations changes the mutant's potential invasion fitness, violating SUTVA.

\paragraph{Game-theoretic fitness.}
In evolutionary game theory, fitness is derived from payoffs in strategic interactions, which depend on the strategies of interaction partners (Maynard Smith \& Price, 1973). A mutation in one individual alters the strategy profile and hence the payoff matrix experienced by others. Therefore, an individual's potential game-theoretic fitness depends on the mutation statuses of its partners, constituting interference and violating SUTVA.

\paragraph{Multilevel/group-selection fitness.}
In multilevel or group-selection models, group-level fitness is defined as a function of group composition, while individual fitness may have both within-group and between-group components (Wade, 1978). A mutation in one group member changes the group's composition and thus the group-level fitness of all members, as well as the partitioning of selection into within- and between-group components. Consequently, individuals' potential outcomes depend on others' treatment statuses, and SUTVA is violated.

\subsection{Summary}

\begin{table}[ht]
\centering
\caption{Classification of fitness definitions by SUTVA status.}
\label{tab:sutva}
\begin{tabular}{>{\raggedright\arraybackslash}p{0.30\linewidth}
                >{\centering\arraybackslash}p{0.12\linewidth}
                >{\raggedright\arraybackslash}p{0.45\linewidth}}
\toprule
Fitness definition & SUTVA & Causal framework needed \\
\midrule
Individual viability                  & Yes & Standard potential outcomes \\
Wrightian (density-independent)       & Yes & Standard potential outcomes \\
Malthusian fitness                    & Yes & Standard potential outcomes \\
Geometric mean fitness                & Yes & Standard potential outcomes \\
Euler--Lotka fitness                  & Yes & Standard potential outcomes \\
\midrule
Frequency-dependent fitness           & No  & Causal inference with interference \\
Inclusive fitness                     & No  & Causal inference with interference \\
Invasion fitness                      & No  & Causal inference with interference \\
Game-theoretic fitness                & No  & Causal inference with interference \\
Group/multilevel fitness              & No  & Causal inference with interference \\
\bottomrule
\end{tabular}
\end{table}

This classification shows that the choice of fitness definition determines which causal-identification strategy is appropriate. Where SUTVA is satisfied, the causal effect of a mutation is identified by the simple difference in conditional expectations, as given in Theorem~\ref{thm:identification}. Where SUTVA fails, the established literature on causal inference with interference provides the required extensions, including partial interference under group structure, exposure mappings, and inverse-probability-of-treatment weighting under interference (Hudgens \& Halloran, 2008; Tchetgen Tchetgen \& VanderWeele, 2012).

\newpage
\section{Mutation Classes and Ignorability}
\label{sec:ignorability}

The ignorability assumption (Assumption~\ref{as:ignorability}) requires that the occurrence of a mutation is independent of its fitness effect, conditional on pre-mutation covariates. In evolutionary terms, this means the mutation process must be ``blind'' to its phenotypic consequences: the probability that a particular mutation occurs does not depend on whether that mutation would be beneficial, neutral, or deleterious, once relevant covariates have been controlled. This assumption is what makes mutation a credible natural experiment --- a quasi--randomised intervention whose causal effect can be identified from observational data.

Here, I classify major mutational mechanisms according to whether they satisfy ignorability unconditionally, satisfy it only after conditioning on observable covariates, or violate it in ways that necessitate genuinely new causal structure. For mutation classes that do not satisfy ignorability unconditionally, the ``remedies'' listed below specify additional conditioning variables under which ignorability is restored, thereby defining extended causal estimands (conditional AMEs). These conditional AMEs can then be aggregated over the empirical distribution of the conditioning variables to recover the unconditional AMEs and Total Mutation Effects that enter the dynamical equations of Section~\ref{sec:dynamics}.

In causal-graph terms, mechanisms that induce ignorability violations do so by creating back-door paths from potential fitness outcomes to mutation assignment through covariates such as genomic context or stress state. Conditioning on these covariates blocks the back-door paths, restoring the independence of treatment assignment from potential outcomes in the standard sense. The resulting estimands are conditional AMEs of the form $E[Y(1) - Y(0) \mid X]$, where $X$ summarises the relevant covariate set.

Table~\ref{tab:ignorability} summarises the classification. In the remainder of this section, I provide the biological and inferential justification for each assignment, with an emphasis on how the remedies extend the scope of our causal--evolutionary framework.

\begin{table}[ht]
\centering
\caption{Classification of mutation classes by ignorability status. The final column lists conditions under which Assumption~\ref{as:ignorability} is restored.}
\label{tab:ignorability}
\begin{tabular}{>{\raggedright\arraybackslash}p{0.26\linewidth}
                >{\centering\arraybackslash}p{0.14\linewidth}
                >{\raggedright\arraybackslash}p{0.24\linewidth}
                >{\raggedright\arraybackslash}p{0.26\linewidth}}
\toprule
Mutation class & Ignorability & Mechanism of violation & Condition for conditional ignorability \\
\midrule
Point mutations (normal replication) & Satisfied & --- & --- \\
CpG / transition-biased substitutions & Weakly violated & Positional mutation-rate heterogeneity & Condition on local genomic context \\
Indels and gene duplications & Approximately satisfied & Repetitive-sequence and length bias & Condition on genomic architecture \\
Stress-induced mutagenesis (SOS response) & Violated & Mutation rate depends on fitness via stress & Condition on stress state and covariates \\
Transposable element insertions & Partially violated & Chromatin-dependent insertion preference & Condition on chromatin state and TE family \\
Experimental mutagenesis & Satisfied by design & --- & Randomisation in design \\
\bottomrule
\end{tabular}
\end{table}

\subsection{Point Mutations Under Normal Replication}
\label{sec:point}

The foundational experiment of Luria and Delbr\"{u}ck established that spontaneous mutations in bacteria arise prior to and independently of the selective conditions that reveal them (Luria \& Delbr\"{u}ck, 1943). This result, confirmed across prokaryotic and eukaryotic systems, provides the empirical cornerstone of the ignorability assumption: under normal DNA replication, the probability of a base-pair substitution at a given site is governed by polymerase fidelity, proofreading, and mismatch repair, none of which have access to information about the phenotypic or fitness consequences of the error.

Typical per-base-pair, per-generation error rates lie in the range $10^{-8}$ to $10^{-10}$ for organisms with intact repair machinery (Drake et al., 1998, Lynch, 2010). The identity of the substitution (e.g. $\text{A}\to \text{G}$ versus $\text{A}\to \text{T}$) is determined by thermodynamic base-pairing properties and polymerase active-site geometry, not by downstream phenotypic effects. Therefore, for this class of mutations, there is no back-door path from potential fitness outcomes to mutation assignment, and ignorability holds unconditionally: the treatment assignment $M$ is independent of the potential outcomes $\{Y(0), Y(1)\}$ without the need to condition on additional covariates.

\subsection{Transition--Transversion Bias and CpG Hypermutability}
\label{sec:cpg}

Point mutations are blind to their fitness effects but not uniformly distributed across the genome. Two well-characterised biases deserve attention. First, transitions (purine $\leftrightarrow$ purine or pyrimidine $\leftrightarrow$ pyrimidine) often occur at higher rates than transversions, owing to the greater thermodynamic similarity of same-class bases and the geometry of wobble base pairs in the polymerase active site (Lynch, 2007). Second, in organisms with cytosine methylation, CpG dinucleotides exhibit mutation rates 10- to 50-fold higher than the genomic average because spontaneous deamination of 5-methylcytosine produces thymine, a normal base that can evade mismatch repair (Coulondre et al., 1978; Bird, 1980).

These biases introduce heterogeneity in the probability of mutation across genomic positions. Recent work on mutation bias and adaptation has shown that while transition--transversion bias can influence which adaptive substitutions are observed, this reflects a bias in the supply of mutations, not a dependence of the occurrence of a specific mutation at a given site on its fitness effect (Stoltzfus \& McCandlish, 2017; Cano \& Payne, 2020). The distinction is essential: genomic context $X$ affects both baseline fitness $Y(0)$ and mutation probability $M$, creating a back-door path $Y(0) \leftarrow X \rightarrow M$, but conditional on $X$, the occurrence of a given base substitution carries no information about the sign or magnitude of its fitness effect. Conditioning on local sequence context (e.g. CpG status, base identity, neighbouring nucleotides) thus restores ignorability:
\[
\{Y(0), Y(1)\} \perp\!\!\!\perp M \mid X,
\]
and Theorem~\ref{thm:identification} applies to the context-conditional AME $E[Y(1)-Y(0)\mid X]$.

\subsection{Insertions, Deletions, and Gene Duplications}
\label{sec:indels}

Small insertions and deletions (indels) arise primarily through replication slippage at repetitive sequences and through template switching during DNA synthesis, while gene duplications and larger structural variants result from unequal crossing-over, retrotransposition, and segmental duplication (Zhang et al., 2003; Hastings et al., 2009). For all of these classes, the probability of occurrence depends on local genomic architecture --- repetitive element density, sequence composition, and replication fork dynamics --- rather than on the phenotypic consequences of the event.

As with CpG bias, positional heterogeneity in indel and duplication rates constitutes a violation of unconditional ignorability that is correctable by conditioning on measurable genomic features such as repeat content, microsatellite length, and proximity to replication origins or recombination hotspots. Once genomic architecture $X$ is accounted for, the occurrence of an indel or duplication event at a given position carries no information about whether that event will be beneficial, neutral, or deleterious, and conditional ignorability is restored. The relevant estimand is again a conditional AME $E[Y(1)-Y(0)\mid X]$ that can be integrated over the distribution of $X$ to obtain unconditional AMEs for use in evolutionary dynamics.

An additional subtlety arises for gene duplications: duplicated genes often experience relaxed purifying selection and may accumulate mutations at elevated rates over evolutionary time (neofunctionalisation or subfunctionalisation) (Lynch 2007). This is, however, a post-treatment process that affects the fitness trajectory of the duplicated gene over subsequent generations, not the probability of the initial duplication event. As such, it does not violate the ignorability assumption as formulated here, which concerns only the independence of treatment assignment from potential outcomes at the time of the mutational event.

\subsection{Stress--Induced Mutagenesis}
\label{sec:stress}

A more consequential violation of ignorability arises from stress-induced mutagenesis. When bacteria and other organisms encounter environmental stress---starvation, antibiotic exposure, oxidative damage, DNA double-strand breaks---stress responses such as the bacterial SOS system are activated, leading to upregulation of error-prone DNA polymerases and suppression of error-correcting pathways (Radman, 1975; Foster, 2007; Fitzgerald et al., 2017). The result is a transient increase in mutation rate, often by one to two orders of magnitude, tightly coupled to the organism's physiological state.

From the perspective of causal inference, the key feature is that mutation assignment is no longer independent of potential fitness: organisms experiencing stress are typically in a state of reduced baseline fitness (they are poorly adapted to the current environment or are suffering damage), and the same stress response increases their mutation rate. This creates a confounding pathway
\[
Y(0) \leftarrow \text{Stress} \rightarrow M,
\]
so that $P(M=1)$ becomes correlated with $Y(0)$, the counterfactual fitness in the absence of mutation, even before any causal effect of the new mutation is realised. Ignorability is violated because the mutation process ``knows'', via stress, that it is acting on low-fitness individuals.

Several features of stress-induced mutagenesis determine the severity and tractability of this violation. First, in many microbial experiments, stress is approximately homogeneous within a population or subpopulation: in a clonal culture exposed uniformly to an antibiotic, all cells experience similar elevation of mutation rates, so within the stressed subpopulation ignorability may hold \emph{conditional on stress status}. This suggests a stratification approach in which we define a covariate $S \in \{\text{stressed},\text{unstressed}\}$ and assume
\[
\{Y_i(0), Y_i(1)\} \perp\!\!\!\perp M_i \mid \mathbf{X}_i, S_i,
\]
where $\mathbf{X}_i$ collects additional genomic and phenotypic covariates. If stress status is observable (e.g. via reporters of SOS activation or growth-rate measurements), then conditional ignorability holds within strata of $(\mathbf{X},S)$, and Theorem~\ref{thm:identification} applies to conditional AMEs such as $E[Y(1)-Y(0)\mid \mathbf{X},S=\text{stressed}]$.

Second, stress--induced mutagenesis is typically transient with mutation rates returning to baseline once the stress is alleviated, resulting in a time-limited violation which affects only a fraction of mutational events in a lineage's history (Foster, 2007; Fitzgerald et al., 2017). Third, when stress status cannot be measured directly, the framework can still accommodate the violation using methods from longitudinal observational studies, such as inverse-probability-of-treatment weighting based on observed covariate histories or sensitivity analysis that quantifies how strong the unmeasured confounding through stress would need to be to overturn qualitative conclusions.

In sum, stress-induced mutagenesis provides a biologically important example in which ignorability fails unconditionally but can often be restored, at least approximately, by conditioning on stress state and related covariates. In that sense, the ``remedies'' in Table~\ref{tab:ignorability} specify covariate sets under which mutation events can be treated as quasi--random interventions, extending the causal--evolutionary framework to this class of mechanisms.

\subsection{Transposable Element Insertions}
\label{sec:te}

Transposable elements (TEs) constitute a distinct mutation class because their insertion sites are influenced by interactions between the transposition machinery and the chromatin environment. Different TE families exhibit different insertion preferences: some retrotransposons insert upstream of particular classes of genes, others target open chromatin or promoter-proximal regions, and still others are enriched in gene-rich, GC-rich genomic compartments (Fescotte \& Pritham, 2007; Slotkin \& Martienssen, 2007). These insertion preferences are mediated by binding of TE integrases or transposases to specific chromatin features and protein complexes.
These preferences create a non-uniform distribution of TE-mediated mutations across the genome that is correlated with functional density and expression. Regions of open chromatin, where TE insertion is more likely, are also regions enriched in expressed genes where mutations may have larger fitness effects. This constitutes a partial violation of ignorability because the probability of a TE insertion at a given site carries information about the likely magnitude (though not necessarily the sign) of its fitness effect. Indeed,  both are functions of underlying chromatin and genomic context. In graphical terms, chromatin state $X$ is a common cause of both $M$ and $Y(0)$, opening a back-door path $Y(0) \leftarrow X \rightarrow M$.

As with other context-dependent biases, the violation is correctable in principle through conditioning. The relevant covariates --- chromatin accessibility (assayable by ATAC-seq or DNase-seq), histone modifications, TE family identity, distance to the nearest gene, and local expression level --- are measurable in modern genomic datasets (Rebollo 2012). Conditional on these features $X$, the specific fitness effect of a TE insertion at a given site is independent of the insertion probability, restoring conditional ignorability:
\[
\{Y(0), Y(1)\} \perp\!\!\!\perp M \mid X.
\]
The corresponding estimands are TE- and context-specific AMEs $E[Y(1)-Y(0)\mid X]$, which can be aggregated over the empirical distribution of $X$ to obtain unconditional contributions of TE insertions to evolutionary dynamics.

TE mobilisation can also be influenced by stress, creating an overlap with the stress-induced mutagenesis category (Section~\ref{sec:stress}). Environmental stressors can derepress TEs through relaxation of epigenetic silencing, linking TE activity to the organism's fitness state (Capy et al., 2000; Lanciano \& Mirouze, 2018). In such cases, stress depresses baseline fitness and elevates the probability of a mutational event, creating an analogous confounding structure, and both chromatin context and stress state enter the conditioning set needed to restore ignorability.

\subsection{Experimental Mutagenesis}
\label{sec:experimental}

In experimental evolution and functional genomics, mutations are often introduced by the investigator through chemical mutagenesis, UV irradiation, random transposon insertion libraries, or CRISPR-based tiling screens. When the mutagenic agent is applied uniformly and the mutations are distributed across the genome without regard to their fitness effects, ignorability is satisfied by design, precisely as randomisation ensures ignorability in a clinical trial. Deviations from perfect randomness, for example context--dependent CRISPR cutting efficiency, can be treated analogously to the context biases discussed above.

The practical implication is that experimental evolution systems provide convenient settings for estimating AMEs. Mutation--accumulation experiments, in which lines are propagated through repeated single--cell bottlenecks to minimise selection, directly estimate the distribution of fitness effects of new mutations under conditions where both ignorability and the absence of selection are enforced experimentally (Halligan \& Keightley, 2009).

\subsection{Implications for the Causal Framework}
\label{sec:ignor_implications}

The dominant source of new mutations in many organisms --- spontaneous point substitutions during normal DNA replication --- satisfies ignorability unconditionally. The principal sources of correctable violations (sequence-context bias, CpG hypermutability, indel and duplication hotspots, TE insertion preferences, stress-linked TE mobilisation) are all addressable, in principle, through conditioning on measurable genomic and physiological covariates. The most serious violation, stress-induced mutagenesis, is biologically important but temporally restricted and often amenable to stratification by stress status.

The parallel with observational studies in epidemiology is instructive. In a well--designed observational study, treatment assignment (e.g. smoking status) is not randomised, but conditional on a rich set of covariates (age, sex, socioeconomic status, comorbidities), the treatment can be considered approximately ignorable. Similarly, mutation assignment in natural populations is not perfectly randomised, but conditional on genomic context and organismal stress state, the residual confounding is often small relative to the causal signal of selection (Hern\'{a}n \& Robins, 2006).

This implies that the relevant causal estimands are often \emph{conditional} Average Mutation Effects (AMEs), defined given covariates summarising genomic context or stress state, which can then be aggregated over the empirical distribution of these covariates to recover the unconditional AMEs and Total Mutation Effects that enter the dynamical equations of Section~\ref{sec:dynamics}. In each case where ignorability is marked as weakly or partially violated in Table~\ref{tab:ignorability}, the condition listed in the final column specifies a covariate set under which Assumption~\ref{as:ignorability} is restored. Under these conditions, Theorem~\ref{thm:identification} continues to apply, but to conditional AMEs rather than unconditional ones.

Although I have described these adjustments using the language of causal analysis of observational data (conditioning, weighting, stratification), their primary role remains theoretical in the sense that they specify the conditions under which mutation events can be treated as exogenous interventions in the causal sense, thereby extending the scope of our causal--evolutionary framework to a broad range of mutational mechanisms.

\newpage
\section*{References}

\begin{description}
\item Bird, A.\ P.,\ (1980). DNA methylation and the frequency of CpG in animal DNA. \emph{Nucleic Acids Res.}, 8, 1499--1504.
\item Cano, A.\ V., \& Payne, J.\ L.\ (2020). Mutation bias interacts with composition bias to influence adaptive evolution. \emph{PLoS Comput.\ Biol.}, 16, e1007523.
\item Capy, P. et al.,\ (2000). Stress and transposable elements. \emph{Genetica}, 107, 149--159.
\item Charlesworth, B.,\ (1994). \emph{Evolution in Age-Structured Populations} (2nd ed.). Cambridge.
\item Coulondre, C., et al.\ (1978). Molecular basis of base substitution hotspots in \emph{E.\ coli}. \emph{Nature}, 274, 775--780.
\item Doebeli, M., et al., \ (2017). Towards a mechanistic foundation of evolutionary theory. \emph{eLife}, 6, e23804.
\item Drake, J.\ W., \ (1991) A constant rate of spontaneous mutation in DNA-based microbes. \emph{Proc. Natl. Acad. Sci. USA.}, 88, 7160--7164
\item Drake, J.\ W., et al. \ (1998). Rates of spontaneous mutation. \emph{Genetics}, 148, 1667--1686
\item Eigen, M., \& Schuster, P.\ (1977). The hypercycle. \emph{Naturwissenschaften}, 64, 541--565.
\item Fitzgerald, D.\ M., et al.\ (2017). What is mutation? A chapter in the series: How microbes ``jeopardize'' the modern synthesis. \emph{PLoS Genet.}, 13, e1007003.
\item Foster, P.\ L.\ (2007). Stress-induced mutagenesis in bacteria. \emph{Crit.\ Rev.\ Biochem.\ Mol.\ Biol.}, 42, 373--397.
\item Feschotte, C., \& Pritham, E.\ J.\ (2007). DNA transposons and the evolution of eukaryotic genomes. \emph{Annual Review of Genetics}, 41, 331--368.
\item Gillespie, J.\ H.\ (1973). Natural selection with varying selection coefficients --- a haploid model. \emph{Genet.\ Res.}, 21, 115--120.
\item Gould, S.\ J., \& Lewontin, R.\ C.\ (1979). The spandrels of San Marco. \emph{Proc.\ R.\ Soc.\ Lond.\ B}, 205, 581--598.
\item Halligan, D.\ L., \& Keightley, P.\ D.\ (2009). Spontaneous mutation accumulation studies in evolutionary genetics. \emph{Annu.\ Rev.\ Ecol.\ Evol.\ Syst.}, 40, 151--172.
\item Hamilton, W.\ D.\ (1964). The genetical evolution of social behaviour. \emph{J.\ Theor.\ Biol.}, 7, 1--52.
\item Hastings, P.\ J.\ et al. (2009). DNA repair: A link between cancer and aging. \emph{DNA Repair}, 8, 1258--1267.
\item Hern\'{a}n, M.\ A., \& Robins, J.\ M.\ (2006). Estimating causal effects from epidemiological data. \emph{Journal of Epidemiology and Community Health}, 60(7), 578--586.
\item Holland, P.\ W.\ (1986). Statistics and causal inference. \emph{JASA}, 81, 945--960.
\item Hudgens, M.\ G., \& Halloran, M.\ E.\ (2008). Toward causal inference with interference. \emph{JASA}, 103, 832--842.
\item Lanciano, S., \& Mirouze, M.\ (2018). Transposable elements: all mobile, all different and emerging actors. \emph{Curr.\ Opin.\ Plant Biol.}, 42, 44--51.
\item Lenski, R. et al.\ (1991). Long-term experimental evolution in \emph{Escherichia coli}. I. Adaptation and divergence during 2,000 generations. \emph{Am.\ Nat.}, 138, 1315--1341.
\item Lewontin, R.\ C., \& Cohen, D.\ (1969). On population growth in a randomly varying environment. \emph{Proc.\ Natl.\ Acad.\ Sci.\ USA}, 66, 1056--1060.
\item Luria, S.\ E., \& Delbr\"{u}ck, M.\ (1943). Mutations of bacteria from virus sensitivity to virus resistance. \emph{Genetics}, 28, 491--511.
\item Lynch, M.\ (2007). The frailty of adaptive hypotheses. \emph{PNAS}, 104, 8597--8604.
\item Lynch, M.\ (2010). Rate, Molecular Spectrum, and Consequences of Human Mutation. \emph{Proceedings of the National Academy of Sciences}, 107, 961--968
\item Maynard Smith, J., \& Price, G.\ R. (1973) The logic of animal conflict \emph{Nature}, 246, 15--18.
\item Metz, J.\ A.\ J., et al.\ (1992). How should we define ``fitness''? \emph{TREE}, 7, 198--202.
\item Otsuka, J.\ (2016). Causal foundations of evolutionary genetics. \emph{BJPS}, 67, 247--269.
\item Pearl, J.\ (2010). Causal inference. \emph{Causality: objectives and assessment}, 18, 39--58.
\item Radman, M.\ (1975). SOS repair hypothesis. In \emph{Molecular Mechanisms for Repair of DNA}, pp.\ 355--367. Plenum.
\item Rebollo, J.\ E., et al. (2012). Transposable elements: An ancient and diverse source of genetic innovation. \emph{Nature Reviews Genetics}, 13, 491--502.
\item Robins, J.\ M.\ (1986). A new approach to causal inference in mortality studies. \emph{Math.\ Model.}, 7, 1393--1512.
\item Rosenbaum, P.\ R.\ (2002). \emph{Observational Studies} (2nd ed.). Springer.
\item Rubin, D.\ B.\ (1974). Estimating causal effects of treatments. \emph{J.\ Educ.\ Psychol.}, 66, 688--701.
\item Slotkin, D.\ J., \& Martienssen, R.\ (2007). Transposable elements and the epigenetic regulation of the genome. \emph{Nature Reviews Genetics}, 8, 272--285
\item Stoltzfus, A., \& McCandlish, D.\ M.\ (2017). Mutational biases influence parallel adaptation. \emph{Mol.\ Biol.\ Evol.}, 34, 2163--2172.
\item Tenaillon, O., et al.\ (2016). Tempo and mode of genome evolution in a 50,000-generation experiment. \emph{Nature}, 536, 165--170.
\item Tchetgen Tchetgen, E.\ J., \& VanderWeele, T.\ J.\ (2012). On causal inference in the presence of interference. \emph{Stat.\ Methods Med.\ Res.}, 21, 55--75.
\item Wade, M. \ J. (1978) A critical review of the models of group selection. \emph{Quarterly Review of Biology}, 53, 101--114.
\item Zhang, F., et al.\ (2003). Segmental duplication and the evolution of the primate genome. \emph{Nature Reviews Genetics}, 4, 761--773.
\end{description}

\end{document}